\newtheorem{theorem}{Theorem}
\newtheorem{corollary}[theorem]{Corollary}
\newtheorem{definition}[theorem]{Definition}
\newtheorem{example}[theorem]{Example}
\newenvironment{proof}[1][Proof]{\noindent\textbf{#1.} }{\ \rule{0.5em}{0.5em}}
\begin{document}
\centerline{{\large \textbf{Semiconjugate Factorization and Reduction of Order} }}
\vspace{1ex}
\centerline{{\large \textbf{in Difference Equations} }}
\footnotetext{Key words: Semiconjugate, form symmetry, order reduction, factor, cofactor, triangular system}

\vspace{4ex}

\centerline{H. SEDAGHAT*} \footnotetext{\noindent*Department of Mathematics,
Virginia Commonwealth University, Richmond, Virginia 23284-2014
\par
USA, Email: hsedagha@vcu.edu}

\vspace{3ex}

\begin{quote}
\noindent{\small \textbf{Abstract.} We discuss a general method by which a
higher order difference equation on a group is transformed into an equivalent
triangular system of two difference equations of lower orders. This breakdown
into lower order equations is based on the existence of a semiconjugate
relation between the unfolding map of the difference equation and a lower
dimensional mapping that unfolds a lower order difference equation.
Substantial classes of difference equations are shown to possess this property
and for these types of equations reductions of order are obtained. In some
cases a complete semiconjugate factorization into a triangular system of first
order equations is possible.}
\end{quote}

\section{Introduction}

There are a number of known methods by which a difference equation can be
transformed into equations with lower orders. These methods are almost
universally adapted from the theory of differential equations and can be
divided into two major categories. One category uses symmetries in the
solutions of a difference equation to obtain coordinate transformations that
lead to a reduction in order; see, e.g., \cite{bsq}, \cite{hyd}, \cite{ltw},
\cite{mae}. The second category consists of methods (e.g., operator methods)
that rely on the algebraic and analytical properties of the difference
equation itself; see, e.g., \cite{el}, \cite{jor}. For historical reasons the
first of these categories has been developed considerably more than the second.

In this article we introduce a general new method that falls into the second
category. This method uses the existence of a semiconjugate relation to break
down a given difference equation into an equivalent pair of lower order
equations whose orders add up to the order of the original equation. A salient
feature of the equivalent system of two equations is its \textquotedblleft
triangular\textquotedblright\ nature; see \cite{al} for a formal definition as
well as a study of the periodic structure of solutions. Also see \cite{js} for additional information and a list of references. Such a system is
uncoupled in the sense that one of the two equations is independent of the other.

In some cases (e.g., linear equations) we use the semiconjugate relation
repeatedly to transform a higher order difference equation into an equivalent
system of first order equations with triangular uncoupling. In the specific
case of linear equations, the system of first order equations explicitly
reveals all of the eigenvalues of the higher order equation.

Semiconjugate-based reduction in order uses only patterns or symmetries
inherent in the \textit{form} of the difference equation itself. For this
reason, we do not limit ourselves to equations defined on real numbers.
Allowing more general types of objects not only improves the applicability of
the method to areas where discrete modeling naturally occurs (e.g., dynamics
on networks, see for instance, \cite{mr}) but it is actually helpful in
identifying form symmetries even within the context of real numbers where more
than one algebraic operation is defined. After defining the basic idea of
semiconjugate factorization and laying some groundwork we discuss applications
of the method to reduction of order in several classes of difference equations.

\section{Difference equations and semiconjugacy}

Difference equations of order greater than one that are of the following type%
\begin{equation}
x_{n+1}=f_{n}(x_{n},x_{n-1},\ldots,x_{n-k}) \label{dek}%
\end{equation}
determine the forward evolution of a variable $x_{n}$ in discrete time. As is
customary, the time index or the \textit{independent variable} $n$ is
integer-valued, i.e., $n\in\mathbb{Z}$. The number $k$ is a fixed positive
integer and $k+1$ represents the \textit{order} of the difference equation
(\ref{dek}). If the underlying space of variables $x_{n}$ is labled $G$ then
$f_{n}:G^{k+1}\rightarrow G$ is a given function for each $n\geq1$. If
$f_{n}=f$ does not explicitly depend on $n$ then (\ref{dek}) is said to be
\textit{autonomous}. A \textit{solution} of Eq.(\ref{dek}) is any function
$x:\mathbb{Z}\rightarrow G$ that satisfies the difference equation. A
\textit{forward} solution is a sequence $\{x_{n}\}_{n=-k}^{\infty}$ that is
recursively generated by (\ref{dek}) from a set of $k+1$ initial values
$x_{0},x_{-1},\ldots,x_{-k}\in G.$ Forward solutions have traditionally been
of greater interest in discrete models that are based on Eq.(\ref{dek}).

If the functions $f_{n}$ do not have any singularities in $G$ \ for all $n$
then clearly the existence of forward solutions is guaranteed from any given
set of initial values. Otherwise, sets of initial values that upon a finite
number of iterations lead to a singularity of $f_{n}$ for some $n$ must either
be specified or avoided. In this paper we do not discuss the issue of
singularities broadly but deal with them as they occur in various examples.

We assume that the underlying space $G$ of variables $x_{n}$ is endowed with
an algebraic \textit{group} structure; for a study of semiconjugacy in a
topological context see \cite{hsbk} or \cite{scjd}. Often there are multiple group operations
defined on a given set $G$. This feature turns out to be quite useful.
Compatible topologies (making group operation(s) continuous) may exist on $G$
and often occur in applications. Topological or metric concepts may be used
explicitly in defining certain form symmetries and are required in discussing
the asymptotic behaviors of solutions with unknown closed forms in infinite sets.

We \textquotedblleft unfold\textquotedblright\ each $f_{n}$ in the usual way
using the functions $F_{n}:G^{k+1}\rightarrow G^{k+1}$ defined as%
\begin{equation}
F_{n}(u_{0},\ldots,u_{k})=[f_{n}(u_{0},\ldots,u_{k}),u_{0},\ldots
,u_{k-1}],\quad u_{j}\in G\text{ for }j=0,1,\ldots,k. \label{unfold}%
\end{equation}

The \textit{unfoldings} (or associated vector maps) $F_{n}$ determine the
equation
\[
(y_{0,n+1},y_{1,n+1},\ldots,y_{k,n+1})=F_{n}(y_{0,n},y_{1,n},\ldots,y_{k,n})
\]
in $G^{k+1}$ or equivalently, the system of equations%
\[
y_{j,n+1}=\left\{
\begin{array}
[c]{l}%
f_{n}(y_{0,n},y_{1,n},\ldots,y_{k,n}),\text{ if }j=0\\
y_{j-1,n},\text{ for }j=1,\ldots,k
\end{array}
\right.  .
\]

In this context we refer to $(y_{0,n},y_{1,n},\ldots,y_{k,n})$ as the
\textit{state of the system} at time $n$ and to $G^{k+1}$ as the \textit{state
space} of the system, or of (\ref{dek}).

\begin{definition}
\label{scmaps} Let $S$ and $M$ be arbitrary nonempty sets and let $F,\Phi$ be
self maps of $S$ and $M$, respectively. If there is a surjective (onto)
mapping $H:S\rightarrow M$ such that%
\begin{equation}
H\circ F=\Phi\circ H \label{sconj}%
\end{equation}
then we say that the mapping $F$ is \textbf{semiconjugate} to $\Phi$ and write
$F\trianglerighteq\Phi$. We refer to $\Phi$ as a semiconjugate (SC)
\textbf{factor} of $F$. The function $H$ may be called a \textbf{link map}. If
$H$ is a bijection (one to one and onto) then we call $F$ and $\Phi$
\textbf{conjugates} and alternatively write%
\begin{equation}
\Phi=H\circ F\circ H^{-1}. \label{conj}%
\end{equation}

\end{definition}

We use the notation $F\simeq\Phi$ when $F$ and $\Phi$ are conjugates.
Straightforward arguments show that $\simeq$ is an equivalence relation for
self maps and that $\trianglerighteq$ is a transitive relation.

As our first example shows, the fact that the link map $H$ is \textit{not}
injective or one-to-one is essential for reduction of order. Such a
non-conjugate map often exists where conjugacy does not.

\begin{example}
\label{squares}Consider the following self maps of the plane $\mathbb{R}^{2}$:%
\[
F_{1}(u,v)=[u^{2}+v^{2},2uv],\quad F_{2}(u,v)=[u^{2}-v^{2},2uv].
\]
Note that $u^{2}+v^{2}+2uv=(u+v)^{2}$ so let $H_{1}(u,v)=u+v$ to obtain%
\[
H_{1}(F_{1}(u,v))=(u+v)^{2}=[H_{1}(u,v)]^{2}.
\]
It follows that $F_{1}$ is semiconjugate to $\Phi(t)=t^{2}$ on $\mathbb{R}.$
For $F_{2}$ define $H_{2}(u,v)=(u^{2}+v^{2})^{1/2}$ to get%
\[
H_{2}(F_{2}(u,v))=\sqrt{\left(  u^{2}-v^{2}\right)  ^{2}+4u^{2}v^{2}}%
=u^{2}+v^{2}=[H_{2}(u,v)]^{2}.
\]
Therefore, $F_{2}$ is also semiconjugate to $\Phi(t)=t^{2}$ though this time
on $[0,\infty)$ since $H_{2}(u,v)\geq0$ for all $(u,v)\in$ $\mathbb{R}^{2}$.
\end{example}

In the next example we determine all possible semiconjugates for a difference
equation on a finite field.

\begin{example}
\label{z2b}Consider the following autonomous difference equation of order 2 on
$\mathbb{Z}_{2}=\{0,1\}$%
\[
x_{n+1}=(x_{n}+1)x_{n-1},\quad x_{0},x_{-1}\in\{0,1\}
\]
Note that $\mathbb{Z}_{2}$ is a field under addition modulo 2 and ordinary
multiplication of integers. The unfolding of the defining map $f(u,v)=(u+1)v$
is given by $F(u,v)=[(u+1)v,u]$. In order to determine the semiconjugates of
$F$ on $\mathbb{Z}_{2}^{2}$ first we list the four possible self maps of
$\mathbb{Z}_{2}$, i.e.,%
\begin{align*}
\phi_{0}  &  \equiv0,\quad\phi_{1}\equiv1\quad\text{(constant functions)}\\
\iota(t)  &  =t\quad\text{(the identity function)}\\
\phi(t)  &  =t+1.
\end{align*}
Next, there are 16 link maps $H:\mathbb{Z}_{2}^{2}\rightarrow\mathbb{Z}_{2}$
of which the 14 non-constant ones are surjective. We list these 14 maps
succinctly as follows:%
\[%
\begin{array}
[c]{cccc}%
H_{1}=(0,0,0,1) & H_{2}=(0,0,1,0) & H_{3}=(0,0,1,1) & H_{4}=(0,1,0,0)\\
H_{5}=(0,1,0,1) & H_{6}=(0,1,1,0) & H_{7}=(0,1,1,1) & H_{8}=(1,0,0,0)\\
H_{9}=(1,0,0,1) & H_{10}=(1,0,1,0) & H_{11}=(1,0,1,1) & H_{12}=(1,1,0,0)\\
H_{13}=(1,1,0,1) & H_{14}=(1,1,1,0) &  &
\end{array}
\]
In the above list, for instance $H_{14}$ is defined by the rule%
\[
H_{14}:(0,0)\rightarrow1,\ (0,1)\rightarrow1,\ (1,0)\rightarrow
1,\ (1,1)\rightarrow0
\]
and can be written in algebraic form as $H_{14}(u,v)=uv+1.$ The other link
maps are defined in the same way. Now writing $F$ as%
\[
F:(0,0)\rightarrow(0,0),\ (0,1)\rightarrow(1,0),\ (1,0)\rightarrow
(0,1),\ (1,1)\rightarrow(0,1)
\]
and calculating the compositions $H_{j}\circ F$ for $j=1,\ldots,14$ the
following results are obtained:%
\begin{align*}
H_{1}\circ F  &  =(0,0,0,0)=\phi_{0}\circ H_{1},\quad H_{14}\circ
F=(1,1,1,1)=\phi_{1}\circ H_{14}\\
H_{7}\circ F  &  =(0,1,1,1)=H_{7}\circ\iota,\quad H_{8}\circ F=(1,0,0,0)=H_{8}%
\circ\iota.
\end{align*}
Therefore, $F$ is semiconjugate to each of $\phi_{0},\phi_{1}$ and $\iota$ but
not to $\phi$ (but also see Example \ref{z2c} below).
\end{example}

It is worth noting that the maps $H_{7}$ and $H_{8}$ are \textit{invariants}
(see, e.g., \cite{BT}, \cite{GKL}, \cite{sah}) because the factor map is the
identity function ($\iota$ here). It is easy to check that in algebraic form
the two invariants can be written as
\[
H_{7}(u,v)=uv+u+v,\quad H_{8}(u,v)=uv+u+v+1=(u+1)(v+1).
\]

\section{Semiconjugate factorization}

In this section we define the concepts needed to present our results in
subsequent sections.

\subsection{Form symmetries, factors and cofactors}

To make the transition from semiconjugate maps to reduction of order in
difference equations, let $k$ be a non-negative integer and $G$ a nonempty
set. Let $\{F_{n}\}$ be a family of functions $F_{n}:G^{k+1}\rightarrow
G^{k+1}$ where $F_{n}=[f_{1,n},\ldots,f_{k+1,n}]$ \ and $f_{j,n}:G\rightarrow
G$ are the component functions of $F_{n}$ for all $j$ and all $n.$

Let $m$ be an integer, $1\leq m\leq k+1$ and assume that each $F_{n}$ is
semiconjugate to a map $\Phi_{n}:G^{m}\rightarrow G^{m}$. Let $H:G^{k+1}%
\rightarrow G^{m}$ be the link map such that for every $n,$%
\begin{equation}
H\circ F_{n}=\Phi_{n}\circ H. \label{sc0}%
\end{equation}

Suppose that%

\begin{align*}
H(u_{0},\ldots,u_{k})  &  =[h_{1}(u_{0},\ldots,u_{k}),\ldots,h_{m}%
(u_{0},\ldots,u_{k})]\\
\Phi_{n}(t_{1},\ldots,t_{m})  &  =[\phi_{1,n}(t_{1},\ldots,t_{m}),\ldots
,\phi_{m,n}(t_{1},\ldots,t_{m})]
\end{align*}
where $h_{j}:G^{k+1}\rightarrow G$ and $\phi_{j,n}:G\rightarrow G$ are the
corresponding component functions for $j=1,2,\ldots,m$. Then identity
(\ref{sc0}) is equivalent to the system of equations%
\begin{gather}
h_{j}(f_{1,n}(u_{0},\ldots,u_{k}),\ldots,f_{k+1,n}(u_{0},\ldots,u_{k}%
))=\nonumber\\
\phi_{j,n}(h_{1}(u_{0},\ldots,u_{k}),\ldots,h_{m}(u_{0},\ldots,u_{k})),\quad
j=1,2,\ldots,m. \label{scj}%
\end{gather}

If the functions $f_{j,n}$ are given then (\ref{scj}) is a system of
\textit{functional equations} whose solutions $h_{j},\phi_{j,n}$ give the maps
$H$ and $\Phi_{n}.$ Note that if $m<k+1$ then the functions $\Phi_{n}$ on
$G^{m}$ define a system with lower dimension than that defined by the
functions $F_{n}$ on $G^{k+1}.$ We now proceed to find a solution set for
(\ref{scj}).

Let $G$ be a group and denote its operation by $\ast.$ If $F_{n}$ is the
unfolding of the function $f_{n}$ as in (\ref{unfold}) and $F_{n}$ satisfies
(\ref{sc0}) then it need not follow that the maps $\Phi_{n}$ are of scalar
type similar to $F_{n}.$ To ensure that each $\Phi_{n}$ is also of scalar type
we define the first component function of $H$ as%
\begin{equation}
h_{1}(u_{0},\ldots,u_{k})=u_{0}\ast h(u_{1},\ldots,u_{k}) \label{H}%
\end{equation}
where $h:G^{k}\rightarrow G$ is a function to be determined.

Using (\ref{H}) in the first equation in (\ref{scj}) gives%
\begin{gather}
f_{n}(u_{0},\ldots,u_{k})\ast h(u_{0},\ldots,u_{k-1})=\label{scm}\\
g_{n}(u_{0}\ast h(u_{1},\ldots,u_{k}),h_{2}(u_{0},\ldots,u_{k}),\ldots
,h_{m}(u_{0},\ldots,u_{k}))\nonumber
\end{gather}
where the abbreviation $g_{n}$ is used for $\phi_{1,n}:G^{m}\rightarrow G.$

Eq.(\ref{scm}) is a functional equation in which the functions $h,h_{j},g_{n}$
may be determined in terms of the given functions $f_{n}$. Our aim is to
extract a scalar equation of order $m$ such as%
\begin{equation}
t_{n+1}=g_{n}(t_{n},\ldots,t_{n-m+1}) \label{gnt}%
\end{equation}
from (\ref{scm}) in such a way that the maps $\Phi_{n}$ will be of scalar
type. The basic framework for carrying out this process is already in place;
let $\{x_{n}\}$ be a solution of Eq.(\ref{dek}) and define%
\[
t_{n}=x_{n}\ast h(x_{n-1},\ldots,x_{n-k}).
\]

Then the left hand side of\ (\ref{scm}) is%
\begin{equation}
x_{n+1}\ast h(x_{n},\ldots,x_{n-k+1})=t_{n+1}. \label{nplus1}%
\end{equation}
which gives the initial part of the difference equation (\ref{gnt}). In order
that the right hand side of (\ref{scm}) coincide with that in (\ref{gnt}) it
is necessary to define the component functions in (\ref{scm}) as%

\begin{align}
h_{j}(x_{n},\ldots,x_{n-k})  &  =t_{n-j+1}=x_{n-j+1}\ast h(x_{n-j}%
,\ldots,x_{n-k-j+1}),\label{hgj}\\
\text{for }j  &  =2,\ldots,m.\nonumber
\end{align}

Since the left hand side of (\ref{hgj}) does not depend on the terms
\[
x_{n-k-1},\ldots,x_{n-k-j+1}%
\]
it follows that the function $h_{j}$ must be constant in its last few
coordinates. Since $h$ does not depend on $j,$ the number of its constant
coordinates is found from the last function $h_{m}.$ Specifically, we have%
\begin{equation}
h_{m}(x_{n},\ldots,x_{n-k})=x_{n-m+1}\ast h(\underset{k-m+1\text{ variables}%
}{\underbrace{x_{n-m},\ldots,x_{n-k}}},\underset{m-1\text{ terms with }h\text{
constant}}{\underbrace{x_{n-k-1},\ldots,x_{n-k-m+1}}}) \label{consis}%
\end{equation}

The preceding condition leads to the necessary restrictions on $h$ and every
$h_{j}$ for a consistent derivation of (\ref{gnt}) from (\ref{scm}).
Therefore, (\ref{consis}) is a consistency condition. Now from (\ref{hgj}) and
(\ref{consis}) we obtain for $(u_{0},\ldots,u_{k})\in G^{k+1}$
\begin{equation}
h_{j}(u_{0},\ldots,u_{k})=u_{j-1}\ast h(u_{j},u_{j+1}\ldots,u_{j+k-m}),\quad
j=1,\ldots,m. \label{hjd}%
\end{equation}

With this definition for the coordinate functions $h_{j}$ we rewrite
(\ref{scm}) for later reference as%

\begin{gather}
f_{n}(u_{0},\ldots,u_{k})\ast h(u_{0},\ldots,u_{k-m})=\nonumber\\
g_{n}(u_{0}\ast h(u_{1},\ldots,u_{k-m+1}),u_{1}\ast h(u_{2},\ldots
,u_{k-m+2}),\ldots,u_{m-1}\ast h(u_{m},\ldots,u_{k})) \label{scm1}%
\end{gather}

We now define a natural form symmetry concept for the recursive difference
equation (\ref{dek}).

\begin{definition}
\label{fsrec} The function $H=[h_{1},\ldots,h_{m}]$ whose components $h_{j}$
are defined by (\ref{hjd}) is a (recursive)\textit{\ }\textbf{form symmetry}
for Eq.(\ref{dek}). Since the range of $H$ has a lower dimension $m$ than the
dimension $k+1$ of its domain, we say that $H$ is an \textbf{order-reducing}
form symmetry.
\end{definition}

For each solution $\{x_{n}\}$ of (\ref{dek}) according to (\ref{scm}),
(\ref{nplus1}) and (\ref{hjd}) the sequence
\[
t_{n}=x_{n}\ast h(x_{n-1},\ldots,x_{n-k+m-1})
\]
is a solution of (\ref{gnt}) with initial values%

\begin{equation}
t_{-j}=x_{-j}\ast h(x_{-j-1},\ldots,x_{-j-k+m-1}),\quad j=0,\ldots,m-1.
\label{ivfac}%
\end{equation}

Thus, the following pair of lower order equations {}are satisfied:%
\begin{align}
t_{n+1}  &  =g_{n}(t_{n},\ldots,t_{n-m+1}),\label{fac}\\
x_{n+1}  &  =t_{n+1}\ast\lbrack h(x_{n},\ldots,x_{n-k+m})]^{-1} \label{cofac}%
\end{align}
where $-1$ denotes group inversion in $G.$ The pair of equations (\ref{fac})
and (\ref{cofac}) is uncoupled in the sense that (\ref{fac}) is independent of
(\ref{cofac}). Such a pair forms a triangular system as defined in \cite{al} and \cite{js}.

\begin{definition}
Eq.(\ref{fac}) is a \textbf{factor} of Eq.(\ref{dek}) since it is derived from
the semiconjugate factor $\Phi_{n}.$ Eq.(\ref{cofac}) that links the factor to
the original equation is a \textbf{cofactor} of Eq.(\ref{dek}). We refer to
the system of equations (\ref{fac}) and (\ref{cofac}) as a
\textbf{semiconjugate (SC) factorization} of Eq.(\ref{dek}). Note that
\textit{orders }$m$\textit{\ and }$k+1-m$\textit{\ of (\ref{fac}) and
(\ref{cofac}) respectively, add up to the order of (\ref{dek}).}
\end{definition}

\subsection{The factorization theorem}

We are now ready for the following fundamental result, i.e., the equivalence
of Eq.(\ref{dek}) with the system of equations (\ref{fac}) and (\ref{cofac}).

\begin{theorem}
\label{scthm1}\textit{Let }$k\geq1$\textit{, }$1\leq m\leq k$\textit{\ and
suppose that there are functions }$h:G^{k-m+1}\rightarrow G$\textit{\ and
}$g_{n}:G^{m}\rightarrow G$\textit{\ that satisfy equations (\ref{scm}) and
(\ref{hjd}). }

(a) \textit{With the order-reducing form symmetry }%
\begin{equation}
H(u_{0},\ldots,u_{k})=[u_{0}\ast h(u_{1},\ldots,u_{k+1-m}),\ldots,u_{m-1}\ast
h(u_{m},\ldots,u_{k})] \label{Hsc}%
\end{equation}
Eq.(\ref{dek})\textit{\ is equivalent to the SC factorization consisting of
the triangular system of equations (\ref{fac}) and (\ref{cofac}).}

(b) The function $H:G^{k+1}\rightarrow G^{m}$ defined by (\ref{Hsc}) is surjective.

(c) For each $n$, the SC factor map $\Phi_{n}:G^{m}\rightarrow G^{m}$ in
(\ref{sc0}) is the unfolding of Eq.(\ref{fac}). In particular, each $\Phi_{n}$
is of scalar type.
\end{theorem}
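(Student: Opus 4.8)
The plan is to verify the three parts of Theorem~\ref{scthm1} by tracing carefully through the construction that was set up in the preceding derivation, so that most of the work amounts to checking that the definitions are mutually consistent rather than introducing new ideas.

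First I would establish part (b), surjectivity of $H$, since it is needed for $H$ to be a genuine semiconjugate link map in the sense of Definition~\ref{scmaps}. Given an arbitrary target point $(t_1,\ldots,t_m)\in G^m$, I would exhibit a preimage in $G^{k+1}$ by solving the defining relations $u_{j-1}\ast h(u_j,\ldots,u_{j+k-m})=t_j$ one coordinate at a time. The key observation is that each equation in (\ref{Hsc}) determines $u_{j-1}$ explicitly as $u_{j-1}=t_j\ast[h(u_j,\ldots,u_{j+k-m})]^{-1}$, using that $G$ is a group so that $\ast$ admits inverses. Choosing the last $k+1-m$ coordinates $u_m,\ldots,u_k$ freely (say all equal to the identity) and then back-solving for $u_{m-1},u_{m-2},\ldots,u_0$ in descending order produces a preimage; the argument is essentially a triangular substitution, which is why surjectivity holds with no restriction beyond $G$ being a group.

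Next I would prove part (a), the equivalence of Eq.(\ref{dek}) with the system (\ref{fac})--(\ref{cofac}). For the forward direction I would take an arbitrary solution $\{x_n\}$ of (\ref{dek}), define $t_n=x_n\ast h(x_{n-1},\ldots,x_{n-k+m-1})$, and verify that $\{t_n\}$ solves (\ref{fac}): this is exactly identity (\ref{scm}) read along the solution, using (\ref{nplus1}) for the left side and (\ref{hjd}) for the arguments on the right. That the cofactor (\ref{cofac}) holds is then immediate by solving the definition of $t_{n+1}$ for $x_{n+1}$ and applying group inversion, which recovers (\ref{cofac}) verbatim. For the converse I would start from a solution of the triangular system and show the reconstructed $\{x_n\}$ satisfies (\ref{dek}); here I must check that substituting (\ref{cofac}) back into (\ref{scm}) collapses to the original recursion, again invoking (\ref{scm}) but now in the reverse direction. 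The main obstacle is bookkeeping the index ranges: I must confirm that the arguments of $h$ appearing in $t_n$, in (\ref{fac}), and in (\ref{cofac}) all match the domain $G^{k-m+1}$ of $h$ and that the shifted copies of $h$ inside the components $h_j$ are consistent with the consistency condition (\ref{consis}); getting these overlapping windows of indices to align is the delicate part.

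Finally, part (c) would follow by inspection once (a) is in place. I would note that (\ref{fac}) is a scalar difference equation of order $m$, and its unfolding, in the sense of (\ref{unfold}), is the map on $G^m$ sending $(t_n,\ldots,t_{n-m+1})$ to $(g_n(t_n,\ldots,t_{n-m+1}),t_n,\ldots,t_{n-m+2})$. Comparing this with the components $\phi_{j,n}$ extracted from (\ref{scj}), where $\phi_{1,n}=g_n$ and the remaining components act as the shift by construction of the $h_j$ in (\ref{hjd}), shows that $\Phi_n$ coincides with this unfolding and is therefore of scalar type. The only thing to verify here is that the semiconjugacy relation (\ref{sc0}) with this particular $\Phi_n$ is precisely what was proved in part (a), so that no separate computation is required beyond identifying the two descriptions of $\Phi_n$.
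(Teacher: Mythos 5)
Your plan is correct and follows essentially the same route as the paper's proof: surjectivity by back-solving the triangular relations $u_{j-1}=v_j\ast[h(u_j,\ldots,u_{j+k-m})]^{-1}$ in descending order with the trailing coordinates fixed, equivalence in (a) by reading identity (\ref{scm1}) along a solution in both directions, and scalar type in (c) by identifying the lower components of $\Phi_n$ with the shift. The only point you should make explicit in (c) is the short computation $h_j\circ F_n=h_{j-1}$ for $j\geq 2$ (which the paper carries out, together with an appeal to part (b) to pass from $H(G^{k+1})$ to all of $G^m$), since part (a) alone only pins down the first component $\phi_{1,n}=g_n$.
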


\begin{proof}
(a) To show that the SC factorization system consisting of equations
(\ref{fac}) and (\ref{cofac}) is equivalent to Eq.(\ref{dek}) we show that:
(i) each solution $\{x_{n}\}$ of (\ref{dek}) uniquely generates a solution of
(\ref{fac}) and (\ref{cofac}) and conversely (ii) each solution $\{(t_{n}%
,y_{n})\}$ of the system (\ref{fac}) and (\ref{cofac}) correseponds uniquely
to a solution $\{x_{n}\}$ of (\ref{dek}). To establish (i), let $\{x_{n}\}$ be
the unique solution of (\ref{dek}) corresponding to a given set of initial
values $x_{0},\ldots x_{-k}\in G.$ Define the sequence
\begin{equation}
t_{n}=x_{n}\ast h(x_{n-1},\ldots,x_{n-k+m-1}) \label{tn}%
\end{equation}

\noindent for $n\geq-m+1.$ Then for each $n\geq0$ using (\ref{scm1})%
\begin{align*}
x_{n+1}  &  =f_{n}(x_{n},\ldots,x_{n-k})\\
&  =g_{n}(x_{n}\ast h(x_{n-1},\ldots,x_{n-k+m-1}),\ldots,x_{n-m+1}\ast
h(x_{n-m},\ldots,x_{n-k}))\ast\\
&  \hspace{3in}[h(x_{n},\ldots,x_{n-k+m})]^{-1}\\
&  =g_{n}(t_{n},\ldots,t_{n-m+1})\ast\lbrack h(x_{n},\ldots,x_{n-k+m})]^{-1}%
\end{align*}

Therefore, $g_{n}(t_{n},\ldots,t_{n-m+1})=x_{n+1}\ast h(x_{n},\ldots
,x_{n-k+m})=t_{n+1}$ so that $\{t_{n}\}$ is the unique solution of the factor
equation (\ref{fac}) with initial values%
\[
t_{-j}=x_{-j}\ast h(x_{-j-1},\ldots,x_{-j-k+m-1}),\quad j=0,\ldots,m-1.
\]

Further, by (\ref{tn}) for $n\geq0$ we have $x_{n+1}=t_{n+1}\ast\lbrack
h(x_{n},\ldots,x_{n-k+m})]^{-1}$ so that $\{x_{n}\}$ is the unique solution of
the cofactor equation (\ref{cofac}) with initial values $y_{-i}=x_{-i}$ for
$i=0,1,\ldots,k-m$ and $t_{n}$ as obtained above.

To establish (ii), let $\{(t_{n},y_{n})\}$ be a solution of the
factor-cofactor system with initial values
\[
t_{0},\ldots,t_{-m+1},y_{-m},\ldots y_{-k}\in G.
\]

Note that these numbers determine $y_{-m+1},\ldots,y_{0}$ through the cofactor
equation%
\begin{equation}
y_{-j}=t_{-j}\ast\lbrack h(y_{-j-1},\ldots,y_{-j-1-k+m})]^{-1},\quad
j=0,\ldots,m-1. \label{y-n}%
\end{equation}

Now using (\ref{scm1}) for $n\geq0$ we obtain%
\begin{align*}
y_{n+1}  &  =t_{n+1}\ast\lbrack h(y_{n},\ldots,y_{n-k+m})]^{-1}\\
&  =g_{n}(t_{n},\ldots,t_{n-m+1})\ast\lbrack h(y_{n},\ldots,y_{n-k+m})]^{-1}\\
&  =g_{n}(y_{n}\ast h(y_{n-1},\ldots,y_{n-k+m-1}),\ldots,y_{n-m+1}\ast
h(y_{n-m},\ldots,y_{n-k}))\ast\\
&  \hspace{3in}[h(y_{n},\ldots,y_{n-k+m})]^{-1}\\
&  =f_{n}(y_{n},\ldots,y_{n-k})
\end{align*}

Thus $\{y_{n}\}$ is the unique solution of Eq.(\ref{dek}) that is generated by
the initial values (\ref{y-n}) and $y_{-m},\ldots y_{-k}.$ This completes the
proof of (a).

(b) Choose an arbitrary point $[v_{1},\ldots,v_{m}]\in G^{m}$ and set
\[
u_{m-1}=v_{m}\ast h(u_{m},u_{m+1}\ldots,u_{k})^{-1}%
\]
where $u_{m}=u_{m+1}=\ldots u_{k}=\bar{u}$ where $\bar{u}$ is a fixed element
of $G,$ e.g., the identity.\ Then
\begin{align*}
v_{m}  &  =u_{m-1}\ast h(\bar{u},\bar{u}\ldots,\bar{u})\\
&  =u_{m-1}\ast h(u_{m},u_{m+1}\ldots,u_{k})\\
&  =h_{m}(u_{0},\ldots,u_{k})\\
&  =h_{m}(u_{0},\ldots,u_{m-2},v_{m}\ast h(\bar{u},\bar{u}\ldots,\bar{u}%
)^{-1},\bar{u}\ldots,\bar{u}).
\end{align*}
for any choice of points $u_{0},\ldots,u_{m-2}\in G.$ Similarly, define
\[
u_{m-2}=v_{m-1}\ast h(u_{m-1},u_{m}\ldots,u_{k-1})^{-1}%
\]
so as to get
\begin{align*}
v_{m-1}  &  =u_{m-2}\ast h(u_{m-1},u_{m}\ldots,u_{k-1})\\
&  =h_{m-1}(u_{0},\ldots,u_{k})\\
&  =h_{m-1}(u_{0},\ldots,u_{m-3},v_{m-1}\ast h(u_{m-1},\bar{u}\ldots,\bar
{u})^{-1},u_{m-1},\bar{u}\ldots,\bar{u})
\end{align*}
for any choice of $u_{0},\ldots,u_{m-3}\in G.$ Continuing in this way,
induction leads to selection of $u_{m-1},\ldots,u_{0}$ such that%
\[
v_{j}=h_{j}(u_{0},\ldots,u_{m-1},\bar{u}\ldots,\bar{u}),\quad j=1,\ldots,m.
\]
Therefore, $H(u_{0},\ldots,u_{m-1},\bar{u}\ldots,\bar{u})=[v_{1},\ldots
,v_{m}]$ and it follows that $H$ is onto $G^{m}.$

(c) It is necessary to prove that each coordinate function $\phi_{j,n}$ is the
projection into coordinate $j-1$ for $j>1.$ Suppose that the maps $h_{j}$ are
given by (\ref{hjd}). For $j=2$ (\ref{scj}) gives%
\begin{align*}
\phi_{2,n}(h_{1}(u_{0},\ldots,u_{k}),\ldots,h_{m}(u_{0},\ldots,u_{k}))  &
=h_{2}(f_{n}(u_{0},\ldots,u_{k}),u_{0},\ldots,u_{k-1})\\
&  =u_{0}\ast h(u_{1},u_{2}\ldots,u_{k-m+1})\\
&  =h_{1}(u_{0},\ldots,u_{k}).
\end{align*}
Therefore, $\phi_{2,n}$ projects into coordinate 1. Generally, for $j\geq2$ we
have%
\begin{align*}
\phi_{j,n}(h_{1}(u_{0},\ldots,u_{k}),\ldots,h_{m}(u_{0},\ldots,u_{k}))  &
=h_{j}(f_{n}(u_{0},\ldots,u_{k}),u_{0},\ldots,u_{k-1})\\
&  =u_{j-2}\ast h(u_{j-1},u_{j}\ldots,u_{j+k-m-1})\\
&  =h_{j-1}(u_{0},\ldots,u_{k}).
\end{align*}
Therefore, for each $n$ and for every $(t_{1},\ldots,t_{m})\in H(G^{k+1})$ we
have%
\[
\Phi_{n}(t_{1},\ldots,t_{m})=[g_{n}(t_{1},\ldots,t_{m}),t_{1},\ldots,t_{m-1}]
\]
i.e., $\Phi_{n}|_{H(G^{k+1})}$ is of scalar type. Since by Part (b)
$H(G^{k+1})=G^{m}$ it follows that $\Phi_{n}$ is of scalar type.
\end{proof}

We point out that the SC factorization in Theorem \ref{scthm1}(a) does not
require the determination of $\phi_{j,n}$ for $j\geq2.$ However, as seen in
Parts (b) and (c) of the theorem the rest of the picture fits together properly.

We discuss several applications of Theorem \ref{scthm1} in later sections
below. The following example gives an application in finite settings.

\begin{example}
\label{z2c}Let $G$ be any nontrivial abelian group (e.g., $\mathbb{Z}_{2}$ in
Example \ref{z2b} viewed as an additive group) and consider the following
difference equation of order two%
\begin{equation}
x_{n+1}=x_{n-1}+a,\quad a,x_{0},x_{-1}\in G,\ a\not =0.\label{z2cde}%
\end{equation}
The unfolding of Eq.(\ref{z2cde}) is the map $F(u,v)=[v+a,u].$ Note that the
function $H(u,v)=u+v$ is of type $u+h(v)$ with $h$ being the identity function
on $G$ ($H$ is the same as $H_{6}$ in Example \ref{z2b}). With this $H$ we
have
\[
H(F(u,v))=u+v+a=\phi(H(u,v))
\]
where $\phi:G\rightarrow G$ is defined as $\phi(t)=t+a$. Therefore,
$F\trianglerighteq\phi.$ Using Theorem \ref{scthm1} we obtain the SC
factorization of Eq.(\ref{z2cde}) as
\begin{align*}
t_{n+1} &  =\phi(t_{n})=t_{n}+a,\ t_{0}=H(x_{0},x_{-1})=x_{0}+x_{-1}\\
x_{n+1} &  =t_{n+1}-x_{n}=t_{n+1}+x_{n}.
\end{align*}

\end{example}

\subsection{Reduction types and chains}

Theorem \ref{scthm1} leads to a natural classification scheme for order
reduction which we discuss in this section. We begin with a definition.

\begin{definition}
The SC factorization of Eq.(\ref{dek}) into (\ref{fac}) and (\ref{cofac})
gives a \textbf{type-(}$m,k+1-m$\textbf{) order reduction} (or just
\textbf{type-(}$m,k+1-m$\textbf{) reduction}) for (\ref{dek}). We also say
that (\ref{dek}) is a type-($m,k+1-m$) equation in this case.
\end{definition}

A second-order difference equation ($k=1$) can have only the type-(1,1) order
reduction into two first order equations. A third-order equation has two order
reduction types, namely (1,2) and (2,1), a fourth order equation has three
order reduction types (1,2), (2,2) and (2,1) and so on. Of the $k$ possible
order reduction types
\[
(1,k),\ (2,k-1),\cdots,(k-1,2),\ (k,1)
\]
for an equation of order $k+1$ the two extreme ones, namely $(1,k)$ and
$(k,1)$ have the extra appeal of having an equation of order 1 as either a
factor or a cofactor.

Eq.(\ref{dek}) may admit repeated reductions of order through its factor
equation, its cofactor equation or both as follows:%
\[
\text{Eq.(\ref{dek})}\rightarrow\left\{
\begin{array}
[c]{ll}%
\underset{\text{factor equation}}{\underbrace{t_{n+1}=g_{n}(t_{n}%
,\ldots,t_{n-m+1})}}\quad\rightarrow & \left\{
\begin{array}
[c]{l}%
\text{factor\ }\rightarrow\cdots\\
\text{cofactor }\rightarrow\cdots
\end{array}
\right. \\
\underset{\text{cofactor equation}}{\underbrace{x_{n+1}=t_{n+1}\ast
h(x_{n},\ldots,x_{n-k+m})^{-1}}}\,\rightarrow & \left\{
\begin{array}
[c]{l}%
\text{factor }\rightarrow\cdots\\
\text{cofactor }\rightarrow\cdots
\end{array}
\right.
\end{array}
\right.
\]

In the above binary tree structure, we call each branch a \textit{reduction
chain}. If a reduction chain consists only of factor (or cofactor) equations
then it is a \textit{factor }(or\textit{\ cofactor}) \textit{chain}. In
particular, we show later (Corollary \ref{lincor} below) that a linear
difference equation of order $k+1$ has a full cofactor chain leading to a
system of $k+1$ linear first order equations, as discussed earlier in the
introduction. The next example exhibits a full factor chain.

\begin{example}
\label{chains}Consider the following reduction chain:%
\begin{equation}
x_{n+1}=x_{n}+\frac{a(x_{n}-x_{n-1})^{2}}{x_{n-1}-x_{n-2}}\rightarrow\left\{
\begin{array}
[c]{ll}%
t_{n+1}=\dfrac{at_{n}^{2}}{t_{n-1}}\ \rightarrow & \left\{
\begin{array}
[c]{l}%
s_{n+1}=as_{n}\\
t_{n+1}=s_{n+1}t_{n}%
\end{array}
\right. \\
x_{n+1}=t_{n+1}+x_{n} &
\end{array}
\right.  \label{cscsys1}%
\end{equation}
The 3rd order equation is reduced by substituting $t_{n}=x_{n}-x_{n-1}$, then
the 2nd order factor equation is reduced by the substitution $s_{n}%
=t_{n}/t_{n-1}$ (see Definition \ref{hd1def} below and the comments that
follow it). The factor chain in this example has length three as follows:%
\[
x_{n+1}=x_{n}+\frac{a(x_{n}-x_{n-1})^{2}}{x_{n-1}-x_{n-2}}\rightarrow
t_{n+1}=\dfrac{at_{n}^{2}}{t_{n-1}}\rightarrow s_{n+1}=as_{n}.
\]

\end{example}

Since a first order difference equation does not have order-reducing form
symmetries the SC factorization process of an equation of order $k+1$ must
stop in at most $k$ steps. The result then is a system of $k+1$\ first order
equations that mark the ends of factor/cofactor chains. In (\ref{cscsys1}) the
three first order equations that mark the ends of factor/cofactor chains can
be arranged as follows:%
\begin{align}
s_{n+1}  &  =as_{n}\nonumber\\
t_{n+1}  &  =as_{n}t_{n}\label{triangsys1}\\
x_{n+1}  &  =as_{n}t_{n}+x_{n}.\nonumber
\end{align}

Since each equation depends on the variables in the equations above it,
(\ref{triangsys1}) is a \textquotedblleft triangular system.\textquotedblright%
\ In principle, a triangular system can be solved by solving the top-most
equation and then using that solution to solve the next equation and so on. In
the case of (\ref{triangsys1}) since the equations are linear (including the
nonautonomous and nonhomogenous versions) an explicit formula for solutions of
the system can be determined; in particular, the third order equation
(\ref{cscsys1}) is integrable. Whether a complete SC factorization as a
triangular system exists for each higher order difference equation of type
(\ref{dek}) is a difficult question that is equivalent to the apparently more
basic problem of existence of semiconjugate relations for (\ref{dek}).

\section{Reduction of order of difference equations}

In this section we use the methods of the previous section to obtain
reductions in orders of various classes of difference equations.

\subsection{Equations with type-$(k,1)$ reductions}

For type-$(k,1)$ reductions of Eq.(\ref{dek}) $m=k$. Therefore, the function
$h:G\rightarrow G$ in (\ref{hjd}) is of one variable and yields the form symmetry%

\begin{equation}
H(u_{0},\ldots,u_{k})=[u_{0}\ast h(u_{1}),u_{1}\ast h(u_{2})\ldots,u_{k-1}\ast
h(u_{k})]. \label{km}%
\end{equation}

Theorem \ref{scthm1} gives the SC factorization as the pair%

\begin{align}
t_{n+1}  &  =g_{n}(t_{n},t_{n-1},\ldots,t_{n-k+1})\label{kmscf}\\
x_{n+1}  &  =t_{n+1}\ast\lbrack h(x_{n})]^{-1}. \label{kmsccf}%
\end{align}

The functions $g_{n}:G^{k}\rightarrow G$ are determined by the given functions
$f_{n}$ in (\ref{dek}) as in the previous sections. From the semiconjugate
relation it follows that a type-($k,1$) reduction with a form symmetry of type
(\ref{km}) exists if and only if there are functions $g_{n}$ such that
\begin{equation}
g_{n}(u_{0}\ast h(u_{1}),u_{1}\ast h(u_{2}),\ldots,u_{k-1}\ast h(u_{k}%
))=f_{n}(u_{0},u_{1},\ldots,u_{k})\ast h(u_{0}). \label{fsk1phi}%
\end{equation}

\medskip

\noindent \textbf{Remark 1.} 
The order of the factor equation (\ref{kmscf}) is one less than the order of (\ref{dek}).
If a solution $t_{n+1}$ of (\ref{kmscf}) is known then by the factorization theorem the 
corresponding solution $x_{n}$ of (\ref{dek}) is obtained by solving the cofactor equation
(\ref{kmsccf}), which has order one. For this reason we may refer to Eq.(\ref{kmscf}) as an
\textit{order reduction} for Eq.(\ref{dek}).

\subsubsection{Invertibility criterion}

The next result from \cite{hsinvcrt} gives a necessary and sufficient
condition for the existence of $\phi_{n}$ when the function $h$ above is
invertible. We repeat the proof for the reader's convenience.

\textit{Notation:} the symbol $h^{-1}(\cdot)$ denotes the inverse of $h$ as a
function and the symbol $h(\cdot)^{-1}$ denotes group inversion.

\begin{theorem}
\label{hinvfs}(Invertibility criterion) Let $h:G\rightarrow G$ be a bijection.
For $u_{0},v_{1},\ldots,v_{k}\in G$ let $\zeta_{0}=u_{0}$ and define
\begin{equation}
\zeta_{j}=h^{-1}(\zeta_{j-1}^{-1}\ast v_{j}),\quad j=1,\ldots,k. \label{hinvz}%
\end{equation}
Then Eq.(\ref{dek}) has the form symmetry (\ref{km}) with SC factors $\phi
_{n}$ satisfying (\ref{fsk1phi}) if and only if the quantity
\begin{equation}
f_{n}(u_{0},\zeta_{1},\ldots,\zeta_{k})\ast h(u_{0}) \label{hinvcrit}%
\end{equation}
is independent of $u_{0}$ for every $n$.
\end{theorem}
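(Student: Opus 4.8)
The plan is to recognize that the recursion (\ref{hinvz}) is nothing but the inverse of the coordinate change induced by the form symmetry (\ref{km}), exploiting that $h$ is a bijection. Writing $v_j = u_{j-1}\ast h(u_j)$ for the new coordinates appearing on the left of (\ref{fsk1phi}), I would first verify that, with $\zeta_0=u_0$ held fixed, the sequence defined by (\ref{hinvz}) recovers the original coordinates: since $h(\zeta_j)=\zeta_{j-1}^{-1}\ast v_j$ gives $\zeta_{j-1}\ast h(\zeta_j)=v_j$, an easy induction shows that the assignment $u_j=\zeta_j$ is exactly the one for which $v_j=u_{j-1}\ast h(u_j)$ holds for all $j=1,\ldots,k$. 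Thus $(u_0;v_1,\ldots,v_k)\mapsto(u_0;\zeta_1,\ldots,\zeta_k)=(u_0;u_1,\ldots,u_k)$ is a bijection of $G^{k+1}$ onto itself that keeps the first coordinate fixed, and $(u_1,\ldots,u_k)$ is uniquely determined by $(u_0,v_1,\ldots,v_k)$.

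With this correspondence in hand, both implications reduce to the standard fact that a function descends along a projection precisely when it is constant on the fibers. For the reverse implication, assuming the quantity (\ref{hinvcrit}) is independent of $u_0$, I would define $g_n(v_1,\ldots,v_k)$ to be the common value $f_n(u_0,\zeta_1,\ldots,\zeta_k)\ast h(u_0)$; given arbitrary $u_0,\ldots,u_k$ and setting $v_j=u_{j-1}\ast h(u_j)$, the first step forces $\zeta_j=u_j$, so this $g_n$ satisfies (\ref{fsk1phi}) and (\ref{km}) is a form symmetry. For the forward implication, assuming functions $g_n$ satisfying (\ref{fsk1phi}) exist, I would take any $u_0$ and $v_1,\ldots,v_k$, form the $\zeta_j$ via (\ref{hinvz}), and set $u_j=\zeta_j$; then $v_j=u_{j-1}\ast h(u_j)$ and (\ref{fsk1phi}) yields $f_n(u_0,\zeta_1,\ldots,\zeta_k)\ast h(u_0)=g_n(v_1,\ldots,v_k)$, whose right-hand side does not involve $u_0$, giving the claimed independence.

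The only delicate point—more a matter of care than a genuine obstacle—is the bookkeeping of the group operation, which is not assumed commutative, so the placement of the inverse in (\ref{hinvz}) must be matched exactly against the defining relation $v_j=u_{j-1}\ast h(u_j)$; one needs $h(\zeta_j)=\zeta_{j-1}^{-1}\ast v_j$ (inverse on the left) rather than $v_j\ast\zeta_{j-1}^{-1}$ for the recursion to invert the correct transformation. Conceptually the crux is that the single free coordinate $u_0$ indexes the fiber of the non-injective link map $H$, so the existence of the reduced functions $g_n$ is equivalent to fiberwise constancy of the cofactor expression $f_n(\cdot)\ast h(u_0)$, and the invertibility of $h$ is precisely what makes those fibers explicitly parametrizable by (\ref{hinvz}).
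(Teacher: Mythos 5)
Your proof is correct and follows essentially the same route as the paper: invert the change of coordinates $v_{j}=u_{j-1}\ast h(u_{j})$ via the recursion (\ref{hinvz}) (which is exactly where the bijectivity of $h$ enters), show $\zeta_{j}=u_{j}$ by induction, and then read off each implication of (\ref{fsk1phi}) from the resulting correspondence. The only item the paper spells out that you leave implicit is the surjectivity of $H$ onto $G^{k}$ (required of a link map by Definition \ref{scmaps}), but this follows at once from your observation that $(u_{0};v_{1},\ldots,v_{k})\mapsto(u_{0};\zeta_{1},\ldots,\zeta_{k})$ is a bijection of $G^{k+1}$ fixing the first coordinate.
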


\begin{proof}
First assume that the quantity in (\ref{hinvcrit}) is independent of $u_{0}$
for all $v_{1},\ldots,v_{k}$ so that the function
\begin{equation}
g_{n}(v_{1},\ldots,v_{k})=f_{n}(u_{0},\zeta_{1},\ldots,\zeta_{k})\ast h(u_{0})
\label{fsidcond0}%
\end{equation}
is well defined. Next, if $H$ is given by (\ref{km}) then for all $u_{0}%
,u_{1},\ldots,u_{k}$
\[
g_{n}(H(u_{0},u_{1},\ldots,u_{k}))=g_{n}(u_{0}\ast h(u_{1}),u_{1}\ast
h(u_{2}),\ldots,u_{k-1}\ast h(u_{k})).
\]
Define
\begin{equation}
v_{j}=u_{j-1}\ast h(u_{j}),\quad j=1,\ldots,k. \label{fsid1}%
\end{equation}
Then by (\ref{hinvz})
\[
\zeta_{1}=h^{-1}(u_{0}^{-1}\ast v_{1})=h^{-1}(u_{0}^{-1}\ast u_{0}\ast
h(u_{1}))=u_{1}.
\]
In fact $\zeta_{j}=u_{j}$ for every $j,$ for if by way of induction $\zeta
_{l}=u_{l}$ for $1\leq l<j$ then
\[
\zeta_{j}=h^{-1}(\zeta_{j-1}^{-1}\ast v_{j})=h^{-1}(u_{j-1}^{-1}\ast
u_{j-1}\ast h(u_{j}))=u_{j}.
\]
Now by (\ref{fsidcond0})
\[
g_{n}(H(u_{0},u_{1},\ldots,u_{k}))=f_{n}(u_{0},\ldots,u_{k})\ast h(u_{0})
\]
so if $F_{n}$ and $\Phi_{n}$ are the unfoldings of $f_{n}$ and $g_{n}$
respectively, then
\begin{align*}
H(F_{n}(u_{0},\ldots,u_{k}))  &  =[f_{n}(u_{0},\ldots,u_{k})\ast
h(u_{0}),u_{0}\ast h(u_{1}),\ldots,u_{k-2}\ast h(u_{k-1})]\\
&  =[g_{n}(H(u_{0},u_{1},\ldots,u_{k})),u_{0}\ast h(u_{1}),\ldots,u_{k-2}\ast
h(u_{k-1})]\\
&  =\Phi_{n}(H(u_{0},\ldots,u_{k}))
\end{align*}
Finally, to show that $H$ is a semiconjugate link for Eq.(\ref{dek}) we show
that $H$ is onto $G^{k}.$ Let $(v_{1},\ldots,v_{k})\in G^{k}$ and let $u_{k}$
be any element of $G,$ e.g., the identity. Define
\[
u_{j-1}=v_{j}\ast\lbrack h(u_{j})]^{-1},\quad j=k,k-1,\ldots,1
\]

\noindent and note that
\begin{align*}
H(u_{0},\ldots,u_{k})  &  =[u_{0}\ast h(u_{1}),u_{1}\ast h(u_{2}%
)\ldots,u_{k-1}\ast h(u_{k})]\\
&  =[v_{1},v_{2},\ldots,v_{k}]
\end{align*}

\noindent i.e., $H$ is onto $G^{k}$ as claimed.

Conversely, if $H$ as given by (\ref{km}) is a form symmetry then the
semiconjugate relation implies that there are functions $g_{n}$ such that%
\begin{equation}
f_{n}(u_{0},\ldots,u_{k})\ast h(u_{0})=g_{n}(u_{0}\ast h(u_{1}),\ldots
,u_{k-1}\ast h(u_{k})). \label{fsidcond1}%
\end{equation}
For every $v_{1},\ldots,v_{k}$ in $G$ and with $\zeta_{j}$ as defined in
(\ref{hinvz}),%
\begin{align*}
f_{n}(u_{0},\zeta_{1},\ldots,\zeta_{k})\ast u_{0}  &  =g_{n}(u_{0}\ast
h(\zeta_{1}),\zeta_{1}\ast h(\zeta_{2}),\ldots,\zeta_{k-1}\ast h(\zeta_{k}))\\
&  =g_{n}(v_{1},\ldots,v_{k})
\end{align*}
which is clearly independent of $u_{0}.$
\end{proof}

It is worth noting that (\ref{hinvz}) is a backwards version of the cofactor
equation (\ref{kmsccf}) that is obtained by solving it for $x_{n}$ instead of
$x_{n+1}.$ To do this we required $h$ to be invertible. Of course, in
(\ref{hinvz}) it is necessary to iterate only $k$ times.

\subsubsection{Identity and inversion form symmetries}

We now examine two of the simplest possible form symmetries within the context
of the previous section that are based on the scalar maps%
\[
h_{1}(u)=u\quad\text{and}\quad h_{-1}(u)=u^{-1}.
\]

Each of $h_{\pm1}$ is invertible and equals itself (self-inverse maps).

\begin{definition}
The form symmetry of type (\ref{km}) that is generated by $h_{1}$ is the
\textbf{identity form symmetry} and that generated by $h_{-1}$ is the\textbf{
inversion form symmetry}.
\end{definition}

The next result is an immediate consequence of Theorem \ref{hinvfs}.

\begin{corollary}
\label{fsidthm}(a) For every $u_{0},v_{1},\ldots,v_{k}\in G$ let $\zeta
_{0}=u_{0}$ and define $\zeta_{j}=\zeta_{j-1}^{-1}\ast v_{j}$ for
$j=1,\ldots,k.$ Then Eq.(\ref{dek}) has the identity form symmetry if and only
if the quantity
\begin{equation}
f_{n}(u_{0},\zeta_{1},\ldots,\zeta_{k})\ast u_{0} \label{fsidcond}%
\end{equation}
is independent of $u_{0}$ for every $n$.

(b) For every $u_{0},v_{1},\ldots,v_{k}\in G$ let $\zeta_{0}=u_{0}$ and define
$\zeta_{j}=v_{j}^{-1}\ast\zeta_{j-1}$ for $j=1,\ldots,k.$ Then Eq.(\ref{dek})
has the inversion form symmetry if and only if the quantity
\begin{equation}
f_{n}(u_{0},\zeta_{1},\ldots,\zeta_{k})\ast u_{0}^{-1} \label{fsinvcond}%
\end{equation}
is independent of $u_{0}$ for every $n$.
\end{corollary}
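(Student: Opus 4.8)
The plan is to recognize this corollary as a direct specialization of Theorem \ref{hinvfs} (the invertibility criterion) to the two self-inverse maps $h_1$ and $h_{-1}$, so that essentially no new argument is needed beyond checking that the abstract recursion (\ref{hinvz}) and the criterion quantity (\ref{hinvcrit}) reduce to the stated forms under each substitution. First I would observe that both $h_1(u)=u$ and $h_{-1}(u)=u^{-1}$ are bijections of $G$, as noted just before the corollary, so Theorem \ref{hinvfs} applies to each. The key feature being exploited is that each map equals its own functional inverse: $h_1^{-1}=h_1$ trivially, while $h_{-1}^{-1}=h_{-1}$ because $(u^{-1})^{-1}=u$. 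This is precisely what lets the abstract symbol $h^{-1}(\cdot)$ in (\ref{hinvz}) be written out explicitly in both cases.

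For part (a) I would substitute $h=h_1$, so that $h^{-1}$ is the identity. The recursion (\ref{hinvz}) then collapses to $\zeta_j=\zeta_{j-1}^{-1}\ast v_j$, and the criterion quantity (\ref{hinvcrit}) becomes $f_n(u_0,\zeta_1,\ldots,\zeta_k)\ast u_0$. Theorem \ref{hinvfs} states that Eq.(\ref{dek}) admits the form symmetry (\ref{km}) generated by this $h$—that is, the identity form symmetry—if and only if that quantity is independent of $u_0$ for every $n$, which is exactly assertion (a).

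For part (b) I would substitute $h=h_{-1}$, so that $h^{-1}(w)=w^{-1}$. The only computation of substance is unwinding the recursion, namely
\[
\zeta_j=h^{-1}(\zeta_{j-1}^{-1}\ast v_j)=(\zeta_{j-1}^{-1}\ast v_j)^{-1}=v_j^{-1}\ast\zeta_{j-1},
\]
where I use the order-reversing property $(a\ast b)^{-1}=b^{-1}\ast a^{-1}$ of group inversion. The criterion quantity then becomes $f_n(u_0,\zeta_1,\ldots,\zeta_k)\ast u_0^{-1}$, and Theorem \ref{hinvfs} yields assertion (b).

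The whole argument is thus a bookkeeping exercise in reducing the general invertibility criterion to these two concrete generators, and I expect it to be short. The only genuine subtlety—and hence the place to be careful—is the order reversal in part (b): since $G$ need not be abelian, one must write $(\zeta_{j-1}^{-1}\ast v_j)^{-1}=v_j^{-1}\ast\zeta_{j-1}$ rather than $\zeta_{j-1}\ast v_j^{-1}$, so that the recursion and the trailing factor $u_0^{-1}$ come out in the stated order. Everything else follows immediately from the self-inverse property of $h_{\pm1}$.
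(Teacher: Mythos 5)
Your proposal is correct and is exactly the paper's route: the paper offers no written proof, stating only that the corollary is an immediate consequence of Theorem \ref{hinvfs}, and your specialization of (\ref{hinvz}) and (\ref{hinvcrit}) to the self-inverse maps $h_{\pm1}$ (including the order-reversal $(\zeta_{j-1}^{-1}\ast v_j)^{-1}=v_j^{-1}\ast\zeta_{j-1}$ in part (b)) supplies precisely the bookkeeping the paper leaves implicit.
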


For the inversion form symmetry we can improve on Corollary \ref{fsidthm}(b)
considerably. In fact, there is a simple characterization of functions $f_{n}$
which satisfy (\ref{fsinvcond}).

\begin{definition}
\label{hd1def}Equation (\ref{dek}) is \textbf{homogeneous of degree one (HD1)}
if for every $n=0,1,2,\ldots$ the functions $f_{n}$ are all homogeneous of
degree one relative to the group $G,$ i.e.,%
\begin{gather*}
f_{n}(u_{0}\ast t,\ldots,u_{k}\ast t)=f_{n}(u_{0},\ldots,u_{k})\ast t\text{
}\\
\text{for all }t,u_{i}\in G,\ i=0,\ldots,k,\ \text{and all }n\geq0.
\end{gather*}

\end{definition}

Note that the third order equation in Example \ref{chains} is HD1 with respect
to the additive group of real numbers while its second order factor is HD1
with respect to the multiplicative group of nonzero real numbers.

There is an abundance of HD1 functions on groups. For instance, if $G$\ is a
nontrivial group, $k$ a positive integer and $g:G^{k}\rightarrow G$\ is any
given function, then it is easy to verify that the mappings $\bar{g}%
_{j}:G^{k+1}\rightarrow G$ defined by\
\[
\bar{g}_{j}(u_{0},\ldots,u_{k})=g(u_{0}\ast u_{1}^{-1},u_{1}\ast u_{2}%
^{-1},\ldots,u_{k-1}\ast u_{k}^{-1})\ast u_{j}%
\]
are HD1 functions for every $j=0,1,\ldots,k$. Further, if $g_{1},g_{2}%
:$\ $G^{k+1}\rightarrow G$ and $f:G^{2}\rightarrow G$ are HD1 then so is the
composition%
\[
f(g_{1}(u_{0},\ldots,u_{k}),g_{2}(u_{0},\ldots,u_{k})).
\]

The next result was proved by direct arguments in \cite{hd1}; also see \cite{hsarx}. Here we give a different proof using Corollary \ref{fsidthm}(b) and Theorem \ref{scthm1}.

\begin{corollary}
\label{hd1thm} \noindent\textit{Eq.(\ref{dek}) has the inversion form symmetry
if and only if }$f_{n}$\textit{\ is HD1 relative to }$G$\textit{\ for all }%
$n$. \textit{In this case, (\ref{dek}) has a type-(}$k,1$\textit{)
order-reduction with the SC factorization}
\begin{align}
t_{n+1}  &  =f_{n}(1,t_{n}^{-1},(t_{n}\ast t_{n-1})^{-1},\ldots,(t_{n}\ast
t_{n-1}\ast\cdots\ast t_{n-k+1})^{-1})\label{1aa}\\
x_{n+1}  &  =t_{n+1}\ast x_{n}. \label{1ab}%
\end{align}

\end{corollary}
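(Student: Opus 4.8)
The plan is to run both implications through Corollary \ref{fsidthm}(b), which says the inversion form symmetry is present exactly when the quantity $Q_n(u_0)=f_n(u_0,\zeta_1,\ldots,\zeta_k)\ast u_0^{-1}$ is independent of $u_0$, where $\zeta_0=u_0$ and $\zeta_j=v_j^{-1}\ast\zeta_{j-1}$. First I would invert this recursion: telescoping gives $\zeta_j=v_j^{-1}\ast\cdots\ast v_1^{-1}\ast u_0$, hence $\zeta_j\ast u_0^{-1}=(v_1\ast\cdots\ast v_j)^{-1}$ and $v_j=\zeta_{j-1}\ast\zeta_j^{-1}$. The latter shows that for fixed $u_0$ the assignment $(v_1,\ldots,v_k)\mapsto(\zeta_1,\ldots,\zeta_k)$ is a bijection of $G^k$, so every tuple $(u_0,u_1,\ldots,u_k)$ occurs as $(u_0,\zeta_1,\ldots,\zeta_k)$ under the choice $v_j=u_{j-1}\ast u_j^{-1}$. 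This reparametrization is the bridge I will use in both directions.

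For the implication that HD1 implies the inversion form symmetry, I would feed $t=u_0^{-1}$ into Definition \ref{hd1def}: homogeneity rewrites $Q_n(u_0)=f_n(u_0,\zeta_1,\ldots,\zeta_k)\ast u_0^{-1}$ as $f_n(1,\zeta_1\ast u_0^{-1},\ldots,\zeta_k\ast u_0^{-1})$, and by the telescoping identity each slot $\zeta_j\ast u_0^{-1}=(v_1\ast\cdots\ast v_j)^{-1}$ no longer involves $u_0$. Thus $Q_n$ is independent of $u_0$, and Corollary \ref{fsidthm}(b) delivers the form symmetry. The same computation hands me the factor map explicitly, namely $g_n(v_1,\ldots,v_k)=f_n(1,v_1^{-1},(v_1\ast v_2)^{-1},\ldots,(v_1\ast\cdots\ast v_k)^{-1})$, which I will need for (\ref{1aa}).

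For the converse, assuming the inversion form symmetry, Corollary \ref{fsidthm}(b) makes $g_n(v_1,\ldots,v_k):=Q_n(u_0)$ well defined, and the reparametrization $v_j=u_{j-1}\ast u_j^{-1}$ (so $\zeta_j=u_j$) turns this into the pointwise identity $f_n(u_0,\ldots,u_k)=g_n(u_0\ast u_1^{-1},\ldots,u_{k-1}\ast u_k^{-1})\ast u_0$ valid on all of $G^{k+1}$. Substituting $u_i\mapsto u_i\ast t$ leaves every argument $u_{i-1}\ast u_i^{-1}$ fixed, because the inner $t\ast t^{-1}$ cancels, so the right side becomes $g_n(\cdots)\ast u_0\ast t=f_n(u_0,\ldots,u_k)\ast t$; this is exactly HD1. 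I expect this converse to be the main obstacle, since one must upgrade the pointwise independence furnished by Corollary \ref{fsidthm}(b) to a global multiplicative identity, and, as $G$ need not be abelian, keep the left-right order of all products intact throughout.

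With the equivalence in hand, I would invoke Theorem \ref{scthm1} with $m=k$ and $h=h_{-1}$ to obtain the type-$(k,1)$ factorization (\ref{kmscf})--(\ref{kmsccf}). For the cofactor I note $[h(x_n)]^{-1}=(x_n^{-1})^{-1}=x_n$, which is (\ref{1ab}); for the factor I substitute $v_j=t_{n-j+1}$ into the explicit $g_n$ found above, yielding (\ref{1aa}) and completing the argument.
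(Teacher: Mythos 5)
Your proposal is correct and follows essentially the same route as the paper: both directions are run through Corollary \ref{fsidthm}(b) via the telescoped identity $\zeta_j\ast u_0^{-1}=(v_1\ast\cdots\ast v_j)^{-1}$, the converse uses the same substitution $u_i\mapsto u_i\ast t$ in the identity $f_n(u_0,\ldots,u_k)=g_n(u_0\ast u_1^{-1},\ldots,u_{k-1}\ast u_k^{-1})\ast u_0$, and the factorization comes from Theorem \ref{scthm1} with $h(u)=u^{-1}$. Your explicit formula for $g_n$ and the bijectivity remark about $(v_1,\ldots,v_k)\mapsto(\zeta_1,\ldots,\zeta_k)$ merely make precise what the paper states as ``(\ref{hd1a}) holds for arbitrary values of $u_0,\zeta_1,\ldots,\zeta_k$.''
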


\begin{proof}
First if $\zeta_{j}=$ $v_{j}^{-1}\ast\zeta_{j-1}$ as in Corollary
\ref{fsidthm}(b) then by straightforward iteration
\begin{equation}
\zeta_{j}=v_{j}^{-1}\ast\cdots\ast v_{1}^{-1}\ast u_{0},\quad j=1,\ldots,m.
\label{zetaj}%
\end{equation}

Now if $f_{n}$ is HD1 for every $n$ then%
\[
f_{n}(u_{0},\zeta_{1},\ldots,\zeta_{k})\ast u_{0}^{-1}=f_{n}(1,\zeta_{1}\ast
u_{0}^{-1},\ldots,\zeta_{k}\ast u_{0}^{-1})
\]

\noindent which by (\ref{zetaj}) is independent of $u_{0}.$ Thus by Corollary
\ref{fsidthm}(b) Eq.(\ref{dek}) has the inversion form symmetry.

Conversely, assume that (\ref{dek}) has the inversion form symmetry. Then by
Corollary \ref{fsidthm}(b) for every $u_{0},v_{1},\ldots,v_{k}\in G$ the
quantity in (\ref{fsinvcond}) is independent of $u_{0}.$ Thus there are
functions $\phi_{n}$ where
\begin{align}
f_{n}(u_{0},\zeta_{1},\ldots,\zeta_{k})\ast u_{0}^{-1}  &  =\phi_{n}%
(v_{1},\ldots,v_{k})\label{hd1a}\\
&  =\phi_{n}(\zeta_{0}\ast\zeta_{1}^{-1},\ldots,\zeta_{k-1}\ast\zeta_{k}%
^{-1}).\nonumber
\end{align}

Note that (\ref{hd1a}) holds for arbitrary values of $u_{0},\zeta_{1}%
,\ldots,\zeta_{k}$ since $v_{1},\ldots,v_{k}$ are arbitrary. Thus for all
$t,s_{0},\ldots,s_{k}\in G$ and all $n$,%
\begin{align*}
f_{n}(s_{0}\ast t,\ldots,s_{k}\ast t)  &  =\phi_{n}((s_{0}\ast t)\ast
(s_{1}\ast t)^{-1},\ldots,(s_{k-1}\ast t)\ast(s_{k}\ast t)^{-1})\ast(s_{0}\ast
t)\\
&  =[\phi_{n}(s_{0}\ast s_{1}^{-1},\ldots,s_{k-1}\ast s_{k}^{-1})\ast
s_{0}]\ast t\\
&  =f_{n}(s_{0},\ldots,s_{k})\ast t.
\end{align*}
It follows that $f_{n}$ is HD1 relative to $G$ for all $n$.

Finally, the inversion form symmetry
\[
H(u_{0},\ldots,u_{k})=[u_{0}\ast u_{1}^{-1},\ldots,u_{k-1}\ast u_{k}^{-1}]
\]

\noindent yields a semiconjugate relation that changes variables to
$t_{n}=x_{n}/x_{n-1}$ and the SC factorization (\ref{1aa}) and (\ref{1ab}) is
obtained by Theorem \ref{scthm1}.
\end{proof}

\begin{example}
\label{ord3ratid}Consider the 3rd order difference equation%
\begin{equation}
x_{n+1}=\frac{x_{n}x_{n-1}}{a_{n}x_{n}+b_{n}x_{n-2}},\quad a_{n},b_{n}>0.
\label{ord3idfs}%
\end{equation}
Here for all $n$, $f_{n}(u_{0},u_{1},u_{2})=u_{0}u_{1}/(a_{n}u_{0}+b_{n}%
u_{2})$ is HD1 on $(0,\infty)$ under ordinary multiplication. By Corollary
\ref{hd1thm} Eq.(\ref{ord3idfs}) has the following SC factorization%
\begin{align}
t_{n+1}  &  =\frac{t_{n-1}}{a_{n}t_{n}t_{n-1}+b_{n}}\label{ord3ratidfac}\\
x_{n+1}  &  =t_{n+1}x_{n}.\nonumber
\end{align}
Next, consider the factor equation (\ref{ord3ratidfac}) again on $(0,\infty)$
under ordinary multiplication. This equation has order two and represents the 
order reduced form of Eq.(\ref{ord3idfs}) in the sense of Remark 1. Note that
(\ref{ord3ratidfac}) is not HD1; however,
substituting $\zeta_{1}=v_{1}/u_{0}$ in the functions $\phi_{n}(u_{0}%
,u_{1})=u_{1}/(a_{n}u_{0}u_{1}+b_{n})$ gives%
\[
\phi_{n}\left(  u_{0},\frac{v_{1}}{u_{0}}\right)  u_{0}=\frac{v_{1}/u_{0}%
}{a_{n}u_{0}v_{1}/u_{0}+b_{n}}u_{0}=\frac{v_{1}}{a_{n}v_{1}+b_{n}}%
\]
which is independent of $u_{0}.$ Thus Corollary \ref{fsidthm} implies that
Eq.(\ref{ord3ratidfac}) has the identity form symmetry. Its SC factorization
is easily found using Theorem \ref{scthm1} upon substituting $r_{n}%
=t_{n}t_{n-1}$ to get%
\begin{align}
r_{n+1}  &  =\frac{r_{n}}{a_{n}r_{n}+b_{n}}\label{ord2rathd1fac}\\
t_{n+1}  &  =\frac{r_{n+1}}{t_{n}}.\nonumber
\end{align}
The further substitution $s_{n}=1/r_{n}$ in (\ref{ord2rathd1fac}) produces a
linear equation $s_{n+1}=a_{n}+b_{n}s_{n}$ and yields a full factorization of
Eq.(\ref{ord3idfs}) into the following triangular system of first order
equations%
\begin{align*}
s_{n+1}  &  =a_{n}+b_{n}s_{n}\\
t_{n+1}  &  =\frac{1}{s_{n+1}t_{n}}\\
x_{n+1}  &  =t_{n+1}x_{n}.
\end{align*}

\end{example}

The factorizations in the preceding example were used in analyzing the global
behavior of Eq.(\ref{ord3idfs}) in \cite{ord23}. SC factorization of HD1
equations has also been used in \cite{dkmos1} and \cite{ks} to study the
global behavior of difference equations of order two.

\begin{corollary}
\label{rick}Let $\mathcal{F}$ be a nontrivial algebraic field and
$a_{1},\ldots,a_{k}\in\mathcal{F}$ with $a_{k}\not =0.$ Each linear
homogeneous difference equation%
\[
x_{n+1}=a_{1}x_{n}+\cdots+a_{k}x_{n-k}%
\]
of order $k+1$ is HD1 on the mulitplicative group $\mathcal{F}\backslash
\{0\}$. Thus it admits a type-$(k,1)$ reduction with factor equation%
\[
t_{n+1}=a_{1}+\frac{a_{2}}{t_{n}}+\cdots+\frac{a_{k}}{t_{n}t_{n-1}\cdots
t_{n-k+1}}.
\]

\end{corollary}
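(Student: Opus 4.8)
The plan is to verify the two claims in sequence: first that the linear homogeneous map is HD1 on the multiplicative group $\mathcal{F}\backslash\{0\}$, and then to compute the factor equation by applying the explicit formula (\ref{1aa}) from Corollary \ref{hd1thm}. The subtlety here is that the word ``HD1'' refers to the \emph{multiplicative} group structure even though the recursion $x_{n+1}=a_1 x_n+\cdots+a_k x_{n-k}$ is written additively; so the group operation $\ast$ in Definition \ref{hd1def} is ordinary multiplication, and homogeneity of degree one means $f_n(tu_0,\ldots,tu_k)=t\,f_n(u_0,\ldots,u_k)$. First I would check this directly: since $f(u_0,\ldots,u_{k-1})=a_1 u_0+\cdots+a_k u_{k-1}$ (indexing the $k+1$ arguments as $u_0,\ldots,u_k$, with the equation having order $k+1$) is a linear form, scaling every argument by $t\in\mathcal{F}\backslash\{0\}$ pulls $t$ out as a common factor, which is exactly multiplicative HD1. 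The hypothesis $a_k\neq 0$ is what guarantees the equation genuinely has order $k+1$ so that the type-$(k,1)$ reduction is available.

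Having established that $f_n=f$ is HD1, I would invoke Corollary \ref{hd1thm} to conclude that the equation admits the inversion form symmetry and hence a type-$(k,1)$ reduction whose factor equation is given by (\ref{1aa}), namely
\[
t_{n+1}=f\bigl(1,\,t_n^{-1},\,(t_n t_{n-1})^{-1},\ldots,(t_n t_{n-1}\cdots t_{n-k+1})^{-1}\bigr),
\]
where the inverses are group (multiplicative) inverses, i.e.\ reciprocals. The remaining work is purely a substitution: plugging the arguments $1, 1/t_n, 1/(t_n t_{n-1}),\ldots, 1/(t_n\cdots t_{n-k+1})$ into the linear form $f(w_0,\ldots,w_k)=a_1 w_0+\cdots+a_k w_{k-1}$. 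Reading off term by term, the coefficient $a_1$ multiplies the first argument $1$, the coefficient $a_2$ multiplies $1/t_n$, and in general $a_{j+1}$ multiplies $(t_n\cdots t_{n-j+1})^{-1}$, which yields
\[
t_{n+1}=a_1+\frac{a_2}{t_n}+\frac{a_3}{t_n t_{n-1}}+\cdots+\frac{a_k}{t_n t_{n-1}\cdots t_{n-k+1}},
\]
exactly the stated factor equation. The substitution $t_n=x_n/x_{n-1}$ underlying the inversion form symmetry (Corollary \ref{hd1thm}) makes this the Riccati-type reduction associated with the characteristic structure of the linear recursion.

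I do not anticipate a genuine obstacle, as both steps are mechanical once the correct group is identified; the only point requiring care is bookkeeping the index alignment so that the $j$-th telescoping product $t_n t_{n-1}\cdots t_{n-j+1}$ is matched with coefficient $a_{j+1}$, and confirming that the $a_k\neq 0$ hypothesis is what keeps the final term (and hence the order-$k$ factor equation) nondegenerate. One should also note in passing that the factor equation is autonomous precisely because the $a_i$ are constants, so the reduction sends an order-$(k+1)$ linear equation to an order-$k$ rational (Riccati-type) equation, consistent with the general type-$(k,1)$ scheme.
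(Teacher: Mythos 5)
Your proposal is correct and is exactly the argument the paper intends: Corollary \ref{rick} is stated without proof as an immediate consequence of Corollary \ref{hd1thm}, and your two steps (multiplicative HD1 of the linear form $f$ on $\mathcal{F}\backslash\{0\}$ by pulling out the common factor $t$, then term-by-term substitution into formula (\ref{1aa})) are precisely the intended verification. The only wrinkle is an off-by-one in the coefficient indexing, which is already present in the paper's own statement ($k$ coefficients $a_{1},\ldots,a_{k}$ against $k+1$ arguments $x_{n},\ldots,x_{n-k}$); you inherit it but rightly identify it as a bookkeeping matter rather than a mathematical gap.
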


The factor equation in Corollary \ref{rick} is known as the Riccati difference
equation of order $k;$ see \cite{ric2}.

\subsubsection{The linear form symmetry\label{linfs1}}

In practice, the one-variable function $h$ is often defined using structures
that are more complex than a group. In particular, on a field $\mathcal{F}$
the function%
\[
h(u)=-\alpha u
\]
where $\alpha$ is a fixed nonzero element of the field defines a form symmetry%
\begin{equation}
H(u_{0},u_{1},\ldots,u_{k})=[u_{0}-\alpha u_{1},u_{1}-\alpha u_{2}%
,\ldots,u_{k-1}-\alpha u_{k}]. \label{fslin}%
\end{equation}

For convenience, we represent the field operations here by the ordinary
addition and multiplication symbols. We call $H$ defined in (\ref{fslin}) the
\textit{linear form symmetry}. Relative to the additive group of a field, the
linear form symmetry generalizes both the identity form symmetry ($\alpha=-1$)
and the inversion form symmetry ($\alpha=1$).

The linear form symmetry is characterized by a change of variables to
\[
t_{n}=x_{n}-\alpha x_{n-1}.
\]

The factor and cofactor equations for the SC factorization of (\ref{dek}) with
the linear form symmetry are given by (\ref{kmscf}) and (\ref{kmsccf}) as
\begin{align}
t_{n+1}  &  =\phi_{n}(t_{n},\ldots,t_{n-k+1}),\label{linfsfac}\\
x_{n+1}  &  =t_{n+1}+\alpha x_{n} \label{linfscofac}%
\end{align}

For a given sequence $\{t_{n}\}$ in $\mathcal{F}$ the general solution of the
linear cofactor equation (\ref{linfscofac}) is easily found to be%
\begin{equation}
x_{n}=\alpha^{n}x_{0}+\sum_{j=1}^{n}\alpha^{n-j}t_{j},\quad n\geq1.
\label{lincofacsoln}%
\end{equation}

This equation gives a solution of Eq.(\ref{dek}) if $\{t_{n}\}$ is a solution
of (\ref{linfsfac}) with the form symmetry (\ref{fslin}). Difference equations
possessing the linear form symmetry have been studied in \cite{dkmos2} and
\cite{kyoto} where the SC factorization above has been used to reduce
equations of order 2 to pairs of equations of order 1.

The following corollary of Theorem \ref{hinvfs} gives a condition for
verifying whether Eq.(\ref{dek}) has the linear form symmetry. For easier
reading we denote the multiplicative field inversion by the reciprocal notation.

\begin{corollary}
\label{linfsthm}For arbitrary $u_{0},v_{1},\ldots,v_{k}$ in a field
$\mathcal{F}$ define $\zeta_{0}=u_{0}$ and
\begin{equation}
\zeta_{j}=\frac{u_{0}}{\alpha^{j}}-\sum_{i=1}^{j}\frac{v_{i}}{\alpha^{j-i+1}%
},\quad j=1,\ldots,k. \label{linfsa}%
\end{equation}
Then Eq.(\ref{dek}) has the linear form symmetry (\ref{fslin}) if and only if
the quantity%
\[
f_{n}(u_{0},\zeta_{1},\ldots,\zeta_{k})-\alpha u_{0}%
\]
is independent of $u_{0}.$
\end{corollary}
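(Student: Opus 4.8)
The plan is to derive Corollary \ref{linfsthm} as a direct application of the Invertibility Criterion (Theorem \ref{hinvfs}) with the specific choice $h(u)=-\alpha u$. The first thing I would verify is that this $h$ is a bijection on the field $\mathcal{F}$, which holds precisely because $\alpha\neq0$: its functional inverse is $h^{-1}(w)=-w/\alpha$. Having established invertibility, the corollary should fall out by substituting this concrete $h$ into the abstract recursion (\ref{hinvz}) and matching the resulting $\zeta_{j}$ against the closed form (\ref{linfsa}).

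The central computation is to show that the recursively defined $\zeta_{j}$ from Theorem \ref{hinvfs} coincide with the explicit formula in (\ref{linfsa}). Here the group operation $\ast$ is field addition and group inversion $\zeta_{j-1}^{-1}$ is the additive negative $-\zeta_{j-1}$. So (\ref{hinvz}) becomes
\[
\zeta_{j}=h^{-1}(-\zeta_{j-1}+v_{j})=\frac{\zeta_{j-1}-v_{j}}{\alpha},\quad j=1,\ldots,k,
\]
with $\zeta_{0}=u_{0}$. I would then prove by induction on $j$ that this recursion unwinds to $\zeta_{j}=u_{0}/\alpha^{j}-\sum_{i=1}^{j}v_{i}/\alpha^{j-i+1}$. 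The base case $j=1$ gives $\zeta_{1}=(u_{0}-v_{1})/\alpha$, which matches. For the inductive step, substituting the assumed form for $\zeta_{j-1}$ into $\zeta_{j}=(\zeta_{j-1}-v_{j})/\alpha$ shifts every denominator exponent up by one and appends the new term $-v_{j}/\alpha$, exactly reproducing (\ref{linfsa}). This is routine algebra once the recursion is written out.

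Finally, I would translate the invertibility criterion's conclusion into the stated form. Theorem \ref{hinvfs} asserts that Eq.(\ref{dek}) has the form symmetry (\ref{km}) for this $h$ if and only if $f_{n}(u_{0},\zeta_{1},\ldots,\zeta_{k})\ast h(u_{0})$ is independent of $u_{0}$ for every $n$. Since $h(u_{0})=-\alpha u_{0}$ and $\ast$ is addition, the quantity $f_{n}(u_{0},\zeta_{1},\ldots,\zeta_{k})\ast h(u_{0})$ is precisely $f_{n}(u_{0},\zeta_{1},\ldots,\zeta_{k})-\alpha u_{0}$, which is exactly the expression in the corollary. The form symmetry (\ref{km}) with $h(u)=-\alpha u$ is the linear form symmetry (\ref{fslin}), so the equivalence is established.

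The only genuine obstacle is bookkeeping: confirming that the abstract group notation of Theorem \ref{hinvfs} specializes correctly—in particular that $\zeta_{j-1}^{-1}\ast v_{j}$ means $-\zeta_{j-1}+v_{j}$ and that applying $h^{-1}$ divides by $-\alpha$ rather than multiplying—and then carrying out the index arithmetic in the induction without an off-by-one error in the powers of $\alpha$. There is no conceptual difficulty; the result is essentially a dictionary entry specializing the general criterion to a linear $h$ on a field.
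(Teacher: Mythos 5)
Your proposal is correct and follows exactly the route the paper takes: the paper presents this corollary as a specialization of Theorem \ref{hinvfs} to $h(u)=-\alpha u$, and immediately after the statement carries out the same computation $\zeta_{j}=h^{-1}(\zeta_{j-1}^{-1}\ast v_{j})=(\zeta_{j-1}-v_{j})/\alpha$ followed by direct iteration to obtain (\ref{linfsa}). Your induction on $j$ and the identification of $f_{n}(u_{0},\zeta_{1},\ldots,\zeta_{k})\ast h(u_{0})$ with $f_{n}(u_{0},\zeta_{1},\ldots,\zeta_{k})-\alpha u_{0}$ are exactly what the paper intends.
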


Note that (\ref{linfsa}) defines $\zeta_{j}$ in Corollary \ref{linfsthm}
explicitly rather than recursively. It is obtained from the recursive
definition (\ref{hinvz}) by a simple calculation; since%
\[
\zeta_{j}=h^{-1}(\zeta_{j-1}^{-1}\ast v_{j})=-\frac{1}{\alpha}(-\zeta
_{j-1}+v_{j})=\frac{\zeta_{j-1}-v_{j}}{\alpha}%
\]
equality (\ref{linfsa}) can be established by direct iteration. The next
example illustrates Corollary \ref{linfsthm}.

\begin{example}
\label{linexrat}Let $\mathcal{F}$ be a field and $g_{n}:\mathcal{F}%
\rightarrow\mathcal{F}$ be a given sequence of functions. Consider the
difference equation%
\begin{equation}
x_{n+1}=ax_{n-j}+g_{n}(x_{n}+cx_{n-k}) \label{linfsex}%
\end{equation}
where $c\not =0$, $0\leq j\leq k$ and we define $f_{n}(u_{0},u_{1}%
,\ldots,u_{k})=au_{j}+g_{n}(u_{0}+cu_{k}).$ Using (\ref{linfsa}) we obtain%
\begin{align*}
&  f_{n}\left(  u_{0},\frac{u_{0}-v_{1}}{\alpha},\ldots,\frac{u_{0}}%
{\alpha^{k}}-\sum_{i=1}^{k}\frac{v_{i}}{\alpha^{k-i+1}}\right)  -\alpha
u_{0}\\
&  =\left(  -\alpha+\frac{a}{\alpha\,^{j}}\right)  u_{0}-\sum_{i=1}^{j}%
\frac{av_{i}}{\alpha^{j-i+1}}+g_{n}\left(  u_{0}\left[  1+\frac{c}{\alpha^{k}%
}\right]  -\sum_{i=1}^{k}\frac{cv_{i}}{\alpha^{k-i+1}}\right)  .
\end{align*}
If there is $\alpha\in\mathcal{F}$ such that
\begin{equation}
-\alpha+\frac{a}{\alpha\,^{j}}=0\text{ or }a=\alpha^{j+1}\quad\text{and\quad
}c=-\alpha^{k} \label{coeffs}%
\end{equation}
then by Corollary \ref{linfsthm} Eq.(\ref{linfsex}) has the linear form
symmetry with a SC factorization%
\begin{align}
t_{n+1}  &  =-\sum_{i=1}^{j}\alpha^{i}t_{n-i+1}+g_{n}\left(  \sum_{i=1}%
^{k}\alpha^{i-1}t_{n-i+1}\right)  ,\label{faclinexrat}\\
x_{n+1}  &  =t_{n+1}+\alpha x_{n}=\alpha^{n}x_{0}+\sum_{j=1}^{n}\alpha
^{n-j}t_{j}.\nonumber
\end{align}

\end{example}

\noindent\textbf{Remark}. The above example in particular shows that the
nature of the field $\mathcal{F}$ is important in obtaining SC factorizations.
For example, the difference equation%
\[
x_{n+1}=x_{n}+g_{n}(x_{n}+x_{n-k})
\]
satisfies (\ref{coeffs}) on $\mathbb{Z}_{2}$ but not on $\mathbb{C}$.

\subsection{Equations with type-$(1,k)$ reductions}

In this section we obtain type-$(1,k)$ reductions for a class of equations on
the field $\mathbb{C}$ of complex numbers that includes all linear
nonhomogeneous difference equations with constant coefficients as well as some
interesting nonlinear equations.

For a type-$(1,k)$ reduction $m=1.$ Therefore, $h:G^{k}\rightarrow G$ and the
form symmetry has the scalar form
\begin{equation}
H(u_{0},\ldots,u_{k})=u_{0}\ast h(u_{1},\ldots,u_{k}). \label{mk}%
\end{equation}

This form symmetry gives the SC factorization%
\begin{align}
t_{n+1}  &  =g_{n}(t_{n})\label{mkscf}\\
x_{n+1}  &  =t_{n+1}\ast\lbrack h(x_{n},\ldots,x_{n-k+1})]^{-1}.
\label{mksccf}%
\end{align}

\medskip

\noindent \textbf{Remark 2.} 
The order of the cofactor equation (\ref{mksccf}) is one less than the order of (\ref{dek}).
We can obtain the solution $t_{n+1}$ of the factor equation (\ref{kmscf}), which has order one,
either explicitly or determine its asymptotic properties. For this reason we may refer to 
Eq.(\ref{mksccf}) as an \textit{order reduction} for Eq.(\ref{dek}).

\subsubsection{Form symmetries and SC factorizations of separable equations}

Consider the \textit{additively separable} difference equation%
\begin{equation}
x_{n+1}=\alpha_{n}+\phi_{0}(x_{n})+\phi_{1}(x_{n-1})+\cdots\phi_{k}(x_{n-k})
\label{mde}%
\end{equation}
where $k\geq1$ is a fixed integer and (\ref{mde}) is defined over a nontrivial
\textit{subfield} $\mathcal{F}$ of $\mathbb{C}$, the set of complex numbers.
Specifically, we assume that
\begin{equation}
x_{-j},\alpha_{n}\in\mathcal{F},\quad\phi_{j}:\mathcal{F}\rightarrow
\mathcal{F},\quad j=0,1,\ldots,k,\ n=0,1,2,\ldots\label{mdec}%
\end{equation}

We use subfields of $\mathbb{C}$ rather than its subgroups because the results
below require closure under multiplication. The subfield that is most often
encountered in this context is the field of all real numbers $\mathbb{R}$. The
unfolding of Eq.(\ref{mde}) is defined as%
\[
F_{n}(z_{0},z_{1},\ldots,z_{k})=[\alpha_{n}+\phi_{0}(z_{0})+\cdots+\phi
_{k}(z_{k}),z_{0},\ldots,z_{k-1}].
\]

Based on the additive nature of Eq.(\ref{mde}) we look for a form symmetry of
type%
\begin{equation}
H(z_{0},z_{1},\ldots,z_{k})=z_{0}+h_{1}(z_{1})+\cdots+h_{k}(z_{k})
\label{fsad}%
\end{equation}
where $h_{j}:\mathcal{F}\rightarrow\mathbb{C}$ for all $j;$ we do not require
that $\mathcal{F}$ be invariant under the maps $h_{j}.$

With these assumptions, equality (\ref{scm}) takes the form%
\begin{equation}
\alpha_{n}+\phi_{0}(z_{0})+\cdots+\phi_{k}(z_{k})+h_{1}(z_{0})+\cdots
+h_{k}(z_{k-1})=g_{n}(z_{0}+h_{1}(z_{1})+\cdots+h_{k}(z_{k})) \label{sc1}%
\end{equation}
where $g_{n}:\mathbb{C}\rightarrow\mathbb{C}.$ Note that if (\ref{sc1}) holds
and the subfield $\mathcal{F}$ is invariant under $h_{j}$ for all $j$ then
$\mathcal{F}$ is invariant under $g_{n}$ also; i.e., $g_{n}(\mathcal{F}%
)\subset\mathcal{F}$ for all $n.$

Our aim is to determine the functions $h_{j}$ and $g_{n}$ that satisfy the
functional equation (\ref{sc1}). To simplify calculations first we assume that
the functions $\phi_{i}$ and $h_{j}$ are all differentiable on $\mathcal{F}$.
The differentiability assumption can be dropped once the form symmetry is calculated.

Take partial derivatives of both sides of (\ref{sc1}) for each of the $k+1$
variables $z_{0},z_{1},\ldots,z_{k}$ to obtain the following system of $k+1$
partial differential equations:
\begin{align}
\phi_{0}^{\prime}(z_{0})+h_{1}^{\prime}(z_{0})  &  =g_{n}^{\prime}(z_{0}%
+h_{1}(z_{1})+\cdots+h_{k}(z_{k}))\label{pde0}\\
\phi_{1}^{\prime}(z_{1})+h_{2}^{\prime}(z_{1})  &  =g_{n}^{\prime}(z_{0}%
+h_{1}(z_{1})+\cdots+h_{k}(z_{k}))h_{1}^{\prime}(z_{1})\label{pde1}\\
&  \vdots\nonumber\\
\phi_{k-1}^{\prime}(z_{k-1})+h_{k}^{\prime}(z_{k-1})  &  =g_{n}^{\prime}%
(z_{0}+h_{1}(z_{1})+\cdots+h_{k}(z_{k}))h_{k-1}^{\prime}(z_{k-1})\nonumber\\
\phi_{k}^{\prime}(z_{k})  &  =g_{n}^{\prime}(z_{0}+h_{1}(z_{1})+\cdots
+h_{k}(z_{k}))h_{k}^{\prime}(z_{k}). \label{pdek}%
\end{align}

Taking equations (\ref{pde0}) and (\ref{pde1}) together, we may eliminate
$g_{n}^{\prime}$ to obtain
\begin{equation}
\phi_{0}^{\prime}(z_{0})+h_{1}^{\prime}(z_{0})=\frac{\phi_{1}^{\prime}%
(z_{1})+h_{2}^{\prime}(z_{1})}{h_{1}^{\prime}(z_{1})}. \label{spde1}%
\end{equation}

The variables $z_{0}$ and $z_{1}$ are separated on different sides of PDE
(\ref{spde1}) so each side must be equal to a constant $c\in\mathbb{C}.$ The
left hand side of (\ref{spde1}) gives%
\begin{equation}
\phi_{0}^{\prime}(z)+h_{1}^{\prime}(z)=c\Rightarrow h_{1}(z)=cz-\phi_{0}(z).
\label{ah1}%
\end{equation}

Similarly, from the right hand side of (\ref{spde1}) we get%

\[
\frac{\phi_{1}^{\prime}(z)+h_{2}^{\prime}(z)}{h_{1}^{\prime}(z)}=c\Rightarrow
h_{2}(z)=ch_{1}(z)-\phi_{1}(z).
\]

Note that from (\ref{pde0}) and (\ref{ah1}) it follows that%
\begin{equation}
g_{n}^{\prime}(z_{0}+h_{1}(z_{1})+\cdots+h_{k}(z_{k}))=c. \label{gn}%
\end{equation}

It is evident that this calculation can be repeated for equations 3 to $k-1$
in the above system of PDE's and the following functional recursion is
established by induction%
\begin{equation}
h_{j}=ch_{j-1}-\phi_{j-1},\quad j=1,\ldots,k,\ h_{0}(z)\doteq z. \label{hj}%
\end{equation}

Finally, the last PDE (\ref{pdek}) gives%
\begin{equation}
ch_{k}=\phi_{k}. \label{hk}%
\end{equation}

On the other hand, setting $j=k$ in (\ref{hj}) gives $h_{k}=ch_{k-1}%
-\phi_{k-1}$ so equality (\ref{hk}) implies that there must be a restriction
on the functions $\phi_{j}.$ To determine this restriction precisely, first
notice that by (\ref{hk}) and (\ref{hj})
\[
c^{2}h_{k-1}(z)-c\phi_{k-1}(z)-\phi_{k}(z)=0\quad\text{for all }z.
\]

Now, applying (\ref{hj}) repeatedly $k-1$ more times removes the functions
$h_{j}$ to give the following identity%
\begin{equation}
c^{k+1}z-c^{k}\phi_{0}(z)-c^{k-1}\phi_{1}(z)-\cdots-c\phi_{k-1}(z)-\phi
_{k}(z)=0\quad\text{for all }z. \label{lc}%
\end{equation}

This equality shows that the existence of form symmetries of type (\ref{fsad})
requires that the given maps $\phi_{0},\phi_{1},\ldots,\phi_{k}$ plus the
identity function form a linearly dependent set. Equivalently, (\ref{lc})
determines any one of the functions $\phi_{0},\phi_{1},\ldots,\phi_{k}$ as a
linear combination of the identity function and the other $k$ functions; e.g.,%
\begin{equation}
\phi_{k}(z)=c^{k+1}z-c^{k}\phi_{0}(z)-c^{k-1}\phi_{1}(z)-\cdots-c\phi
_{k-1}(z). \label{lck}%
\end{equation}

Thus if there is $c\in\mathbb{C}$ such that (\ref{lc}) or (\ref{lck}) hold
then Eq.(\ref{mde}) has a form symmetry of type (\ref{fsad}). This form
symmetry $H$ can now be calculated as follows: Using the recursion (\ref{hj})
repeatedly gives
\begin{equation}
h_{j}(z)=c^{j}z-c^{j-1}\phi_{0}(z)-\cdots-\phi_{j-1}(z),\quad j=1,\ldots k.
\label{hjz}%
\end{equation}

Therefore,%
\begin{align*}
H(z_{0},z_{1},\ldots,z_{k})  &  =z_{0}+h_{1}(z_{1})+\cdots+h_{k}(z_{k})\\
&  =z_{0}+\sum_{j=1}^{k}[c^{j}z_{j}-c^{j-1}\phi_{0}(z_{j})-\cdots-\phi
_{j-1}(z_{j})].
\end{align*}

With the form symmetry calculated, we now proceed with the SC factorization of
Eq.(\ref{mde}). To determine the first order factor equation, from (\ref{gn})
we gather that $g_{n}(z)$ is a linear non-homogeneous function of $z$. The
precise formula for $g_{n}$ can be determined from (\ref{sc1}), (\ref{hj}) and
(\ref{hk}) as follows: First, note that if the constant $c$ is in the subfield
$\mathcal{F}$ then by (\ref{hjz}) $\mathcal{F}$ is invariant under all $h_{j}$
and thus also under all $g_{n}.$ In this case, the function $H(z_{0}%
,z_{1},\ldots,z_{k})$ is onto $\mathcal{F},$ i.e., $H(\mathcal{F}%
^{k+1})=\mathcal{F}.$ This is easy to see, since for each $z\in\mathcal{F}$
and arbitrary $z_{1},\ldots,z_{k}\in\mathcal{F}$ we may choose $z_{0}%
=z-[h_{1}(z_{1})+\cdots+h_{k}(z_{k})]\in\mathcal{F}$ so that $z=H(z_{0}%
,z_{1},\ldots,z_{k}).$ On the other hand, if $c\notin\mathcal{F}$ then a
similar argument proves that $H$ is onto $\mathbb{C}.$ Now, we calculate
$g_{n}$ on $\mathcal{F}$ or on all of $\mathbb{C}$ as
\begin{align*}
g_{n}(z)  &  =g_{n}(z_{0}+h_{1}(z_{1})+\cdots+h_{k}(z_{k}))\\
&  =\alpha_{n}+\sum_{j=1}^{k}[\phi_{j-1}(z_{j-1})+h_{j}(z_{j-1})]+\phi
_{k}(z_{k})\\
&  =\alpha_{n}+c[z_{0}+h_{1}(z_{1})+\cdots+h_{k}(z_{k})]\\
&  =\alpha_{n}+cz.
\end{align*}

This determines the factor equation (\ref{mkscf}) for Eq.(\ref{mde}); the
cofactor equation is given by (\ref{mksccf}). Therefore, the SC factorization
is%
\begin{align}
z_{n+1}  &  =\alpha_{n}+cz_{n},\quad z_{0}=x_{0}+h_{1}(x_{-1})+\cdots
+h_{k}(x_{-k})\label{lnh}\\
x_{n+1}  &  =z_{n+1}-h_{1}(x_{n})-\cdots-h_{k}(x_{n-k+1}). \label{ro}%
\end{align}

Note that the functions $h_{j}$ do not involve the derivatives of $\phi_{i}$
and we showed that the form symmetry $H$ is a semiconjugate link without using
derivatives. Therefore, the SC factorization is valid even if $\phi_{i}$ are
not differentiable. The following result summarizes the discussion above.

\begin{theorem}
\label{septhm} \noindent Eq.(\ref{mde}) has a separable form symmetry
(\ref{fsad}) if there is $c\in\mathbb{C}$\ such that (\ref{lc}) holds. In this
case, (\ref{mde}) reduces to the system of equations (\ref{lnh}) and
(\ref{ro}) of lower orders with functions $h_{j}$\ given by (\ref{hjz}).
\end{theorem}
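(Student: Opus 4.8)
The plan is to bypass the heuristic, differentiability-based derivation that precedes the statement and instead verify directly that the explicit formulas (\ref{hjz}) for the $h_j$, together with $g_n(z)=\alpha_n+cz$, solve the defining functional equation (\ref{sc1}) of a form symmetry; once this is done the conclusion is immediate from Theorem \ref{scthm1}. The advantage of this route is that it uses only algebraic identities and never differentiates, so that the differentiability hypothesis placed on the $\phi_i$ during the derivation is genuinely unnecessary and the theorem may be stated as an ``if'' under the single hypothesis (\ref{lc}).

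First I would record two purely algebraic facts about the functions defined by (\ref{hjz}). Setting $h_0(z)\doteq z$, a one-line computation from (\ref{hjz}) shows that $c\,h_{j-1}(z)-\phi_{j-1}(z)=h_j(z)$ for $j=1,\ldots,k$, i.e. the recursion (\ref{hj}) holds; equivalently $\phi_{j-1}=c\,h_{j-1}-h_j$. Second, evaluating (\ref{hjz}) at $j=k$ and multiplying through by $c$ gives $c\,h_k(z)=c^{k+1}z-c^k\phi_0(z)-\cdots-c\,\phi_{k-1}(z)$, so the hypothesis (\ref{lc}) is exactly equivalent to the single identity $\phi_k=c\,h_k$, which is (\ref{hk}). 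Thus (\ref{lc}) is precisely what is needed to close the recursion at the top index.

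With these in hand the core step is to check (\ref{sc1}) with $g_n(z)=\alpha_n+cz$. Writing $w=z_0+h_1(z_1)+\cdots+h_k(z_k)$ for the argument of $H$, the right side equals $\alpha_n+cw=\alpha_n+cz_0+\sum_{i=1}^{k}c\,h_i(z_i)$. On the left side I would group the summands by the variable on which they act: the $z_0$ terms are $\phi_0(z_0)+h_1(z_0)$, for $1\le i\le k-1$ the $z_i$ terms are $\phi_i(z_i)+h_{i+1}(z_i)$, and the lone $z_k$ term is $\phi_k(z_k)$. By $\phi_0=c\,h_0-h_1$ the first collapses to $cz_0$; by $\phi_i=c\,h_i-h_{i+1}$ each middle pair collapses to $c\,h_i(z_i)$; and by $\phi_k=c\,h_k$ the last collapses to $c\,h_k(z_k)$. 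Summing reproduces $\alpha_n+cz_0+\sum_{i=1}^{k}c\,h_i(z_i)$, so the two sides agree for all $z_0,\ldots,z_k$. The only place demanding care is the index bookkeeping in the shifted sum $\sum_j h_j(z_{j-1})$, whose alignment with the $\phi_i$ drives the whole collapse; I expect this to be the main, though minor, obstacle.

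Finally I would invoke Theorem \ref{scthm1}. The verified identity says that the map $H$ of (\ref{fsad}) is an order-reducing form symmetry of type (\ref{mk}) with $m=1$, additive group operation, and scalar part $h(z_1,\ldots,z_k)=\sum_{j=1}^{k}h_j(z_j)$; equation (\ref{hjd}) is then just the defining formula for $H$ in the case $m=1$, and surjectivity onto $\mathcal{F}$ (or onto $\mathbb{C}$ when $c\notin\mathcal{F}$) is immediate by solving $z=H(z_0,z_1,\ldots,z_k)$ for $z_0$. Theorem \ref{scthm1} then delivers the SC factorization whose factor equation (\ref{mkscf}) is $t_{n+1}=g_n(t_n)=\alpha_n+c\,t_n$, namely (\ref{lnh}), and whose cofactor equation (\ref{mksccf}) is $x_{n+1}=z_{n+1}-[h_1(x_n)+\cdots+h_k(x_{n-k+1})]$, namely (\ref{ro}), with the $h_j$ as in (\ref{hjz}). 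Since no step used derivatives, the factorization holds without assuming the $\phi_i$ differentiable, which completes the proof.
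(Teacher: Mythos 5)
Your proposal is correct and is essentially the paper's own argument: the paper's rigorous content, after the motivational separation-of-variables derivation, is exactly your direct verification that the $h_j$ of (\ref{hjz}) satisfy the recursion (\ref{hj}), that (\ref{lc}) closes it via $\phi_k=ch_k$, and that grouping the left side of (\ref{sc1}) by variable yields $g_n(z)=\alpha_n+cz$, followed by the same surjectivity remark and the observation that no derivatives are used. You have simply stripped away the heuristic PDE computation that the paper uses to discover the formulas.
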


\subsubsection{Complete factorization of linear equations}

A significant feature of Eq.(\ref{ro}), which by Remark 2 represents a reduction
of order for Eq.(\ref{mde}), is that (\ref{ro}) is of the same separable
type as (\ref{mde}). Thus if the functions $h_{1},\ldots,h_{k}$ satisfy the
analog of (\ref{lc}) for some constant $c^{\prime}\in\mathbb{C}$ then Theorem
\ref{septhm} can be applied to (\ref{ro}). In the important case of linear
difference equations this process can be continued until we are left with a
system of first order linear equations, as seen in the next result (for its
proof and related comments see \cite{hsarx})

\begin{corollary}
\noindent\label{lincor}\textit{The linear non-homogeneous difference equation
of order }$k+1$\textit{\ with constant coefficients}%
\begin{equation}
x_{n+1}+b_{0}x_{n}+b_{1}x_{n-1}+\cdots+b_{k}x_{n-1}=\alpha_{n} \label{LDE}%
\end{equation}
\textit{where }$b_{0},\ldots,b_{k}$\textit{, }$a_{n}\in\mathbb{C}$\textit{\ is
equivalent to the following system of }$k+1$\textit{\ first order linear
non-homogeneous equations}%
\begin{align*}
z_{0,n+1}  &  =\alpha_{n}+c_{0}z_{0,n},\\
z_{1,n+1}  &  =z_{0,n+1}+c_{1}z_{1,n}\\
&  \vdots\\
z_{k,n+1}  &  =z_{k-1,n+1}+c_{k}z_{k,n}%
\end{align*}
\textit{in which }$z_{k,n}=x_{n}$\textit{\ is the solution of Eq.(\ref{LDE})
and the complex constants }$c_{0},c_{1},\ldots,c_{k}$\textit{\ are the
eigenvalues of the homogeneous part of (\ref{LDE}), i.e., they are roots of
the characteristic polynomial}%
\begin{equation}
P(z)=z^{k+1}+b_{0}z^{k}+b_{1}z^{k-1}+\cdots+b_{k-1}z+b_{k}. \label{cp}%
\end{equation}

\end{corollary}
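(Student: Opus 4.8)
The plan is to apply Theorem \ref{septhm} repeatedly, peeling off one eigenvalue at each stage. The key observation is that the homogeneous linear equation (\ref{LDE}) is a special case of the separable equation (\ref{mde}) with $\phi_j(z) = -b_j z$ linear and $\alpha_n$ the forcing term. For such linear functions, the linear-dependence condition (\ref{lc}) reads
\[
c^{k+1}z + b_0 c^k z + b_1 c^{k-1} z + \cdots + b_k z = P(c)\,z = 0 \quad\text{for all }z,
\]
which holds precisely when $c$ is a root of the characteristic polynomial $P$ in (\ref{cp}). So for \emph{any} eigenvalue $c_0$ of (\ref{LDE}), Theorem \ref{septhm} supplies a first-order factor equation $z_{0,n+1} = \alpha_n + c_0 z_{0,n}$ together with a reduced cofactor equation (\ref{ro}) of order $k$.

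First I would verify that the cofactor equation (\ref{ro}) is again linear and homogeneous (up to the inherited forcing), so that the induction can proceed. From (\ref{hjz}) each $h_j(z) = c_0^j z - c_0^{j-1}\phi_0(z) - \cdots - \phi_{j-1}(z)$ is a linear function of $z$ when the $\phi_i$ are linear; hence the cofactor equation $x_{n+1} = z_{n+1} - h_1(x_n) - \cdots - h_k(x_{n-k+1})$ is a linear difference equation of order $k$ in the $x$-variables, with the new forcing term $z_{n+1}$ coming from the first-order factor. The crucial algebraic point is that the characteristic polynomial of this order-$k$ cofactor equation is exactly $P(z)/(z - c_0)$. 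I would establish this by computing the coefficients of the cofactor from the $h_j$ and checking that the induced recursion annihilates precisely the remaining roots $c_1, \ldots, c_k$ of $P$; equivalently, dividing out the factor $(z-c_0)$ corresponds to the order reduction. This is the step I expect to be the main obstacle, since it requires tracking the coefficient bookkeeping through (\ref{hjz}) and confirming that the reduced polynomial is the quotient rather than some other degree-$k$ divisor.

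Next I would set up the induction. Having reduced (\ref{LDE}) to the first-order factor $z_{0,n+1} = \alpha_n + c_0 z_{0,n}$ and an order-$k$ linear cofactor with characteristic polynomial $P(z)/(z-c_0)$, I pick an eigenvalue $c_1$ of the reduced equation (a root of the quotient), apply Theorem \ref{septhm} again to obtain $z_{1,n+1} = z_{0,n+1} + c_1 z_{1,n}$ and a further cofactor of order $k-1$, and continue. At each stage the forcing term of the new factor equation is the solution variable of the previous factor, which is why the system has the nested triangular form $z_{j,n+1} = z_{j-1,n+1} + c_j z_{j,n}$. After $k+1$ steps the process terminates in the stated system, where the final variable $z_{k,n}$ coincides with $x_n$ because the cofactor at each stage recovers the original sequence.

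Finally I would confirm that the full list of constants extracted is exactly the multiset of roots of $P$. Since each reduction divides the current characteristic polynomial by a linear factor $(z-c_j)$ and the degree drops by one each time, after $k+1$ reductions we have used each root of $P$ exactly once (with multiplicity), so $P(z) = \prod_{j=0}^{k}(z - c_j)$ and the $c_j$ are precisely the eigenvalues of the homogeneous part. The equivalence of the system with (\ref{LDE}) is then inherited from the equivalence asserted in Theorem \ref{scthm1}(a), applied at each stage of the chain and composed; I would remark that the triangular (uncoupled) structure guarantees that solving the system top-down reconstructs the unique solution of (\ref{LDE}).
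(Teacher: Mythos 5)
Your proposal is correct and follows essentially the same route the paper takes (and defers to \cite{hsarx} for details): iterate Theorem \ref{septhm}, using the fact that for linear $\phi_j(z)=-b_jz$ the dependence condition (\ref{lc}) reduces to $P(c)=0$, and that the cofactor equation (\ref{ro}) is again linear of order $k$ with the previous factor variable as its forcing term, which produces the cofactor chain. The coefficient bookkeeping you flag as the main obstacle is just Horner's scheme: by (\ref{hjz}), $h_j(z)=\left(c_0^{\,j}+b_0c_0^{\,j-1}+\cdots+b_{j-1}\right)z$, and these are precisely the coefficients of the synthetic-division quotient, so the cofactor's characteristic polynomial is exactly $P(z)/(z-c_0)$ as you claim.
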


The preceding result shows that the classical reduction of order technique
through operator factorization is also a consequence of semiconjugacy. Indeed,
even the eigenvalues can be explained via semiconjugacy. Corollary
\ref{lincor} also shows that the complete factorization of the linear equation
(\ref{LDE}) results in a cofactor chain of length $k+1.$

\subsubsection{Multiplicative forms}

As another application of Theorem \ref{septhm} we consider the following
difference equation on the positive real line%
\begin{gather}
y_{n+1}=\beta_{n}\psi_{0}(y_{n})\psi_{1}(y_{n-1})\cdots\psi_{k}(y_{n-k}%
),\label{mult}\\
\beta_{n},y_{-j}\in(0,\infty),\ \psi_{j}:(0,\infty)\rightarrow(0,\infty
),\ j=0,\ldots k.\nonumber
\end{gather}

This difference equation is separable in a multiplicative sense, i.e., with
$(0,\infty)$ as a multiplicative group. Taking the logarthim of Eq.(\ref{mult}%
) changes it to an additive equation. Specifically, by defining%
\[
x_{n}=\ln y_{n},\ y_{n}=e^{x_{n}},\ \ln\beta_{n}=\alpha_{n},\ \phi_{j}%
(r)=\ln\psi_{j}(e^{r}),\ j=0,\ldots k,\ r\in\mathbb{R}%
\]
we can transform (\ref{mult}) into (\ref{mde}). Then Theorem \ref{septhm}
implies the following:

\begin{corollary}
\label{sepmult} \noindent\textit{Eq.(\ref{mult}) has a form symmetry }%
\begin{equation}
H(t_{0},t_{1},\ldots,t_{k})=t_{0}h_{1}(t_{1})\cdots h_{k}(t_{k}) \label{Hmult}%
\end{equation}
\textit{if there is }$c\in\mathbb{C}$\textit{\ such that the following is true
for all }$t>0,$%
\begin{equation}
\psi_{0}(t)^{c^{k}}\psi_{1}(t)^{c^{k-1}}\cdots\psi_{k}(t)=t^{c^{k+1}}.
\label{lcmult}%
\end{equation}
\textit{The functions }$h_{j}$ \textit{in (\ref{Hmult}) are given as}%
\begin{equation}
h_{j}(t)=t^{c^{j}}\psi_{0}(t)^{-c^{j-1}}\cdots\psi_{j-1}(t),\quad j=1,\ldots k
\label{hjmult}%
\end{equation}
\textit{and the form symmetry in (\ref{Hmult}) and (\ref{hjmult}) yields the
type-}$(1,k)$\textit{\ order reduction}
\begin{align}
r_{n+1}  &  =\beta_{n}r_{n}^{c},\quad r_{0}=y_{0}h_{1}(y_{-1})\cdots
h_{k}(y_{-k})\label{fmult}\\
y_{n+1}  &  =\frac{r_{n+1}}{h_{1}(y_{n})\cdots h_{k}(y_{n-k+1})}.
\label{flink}%
\end{align}

\end{corollary}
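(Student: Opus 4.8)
The plan is to avoid re-deriving anything and instead transport Theorem~\ref{septhm} across the group isomorphism $\ln:((0,\infty),\cdot)\to(\mathbb{R},+)$ that is set up in the paragraph preceding the statement. Since $\ln$ is a bijective homomorphism, applying it coordinatewise turns Eq.(\ref{mult}) literally into the additively separable equation Eq.(\ref{mde}) with $\alpha_n=\ln\beta_n$ and $\phi_j(r)=\ln\psi_j(e^r)$; at the level of unfoldings, the map $(y_0,\dots,y_k)\mapsto(\ln y_0,\dots,\ln y_k)$ conjugates the unfolding of (\ref{mult}) to that of (\ref{mde}). Because this conjugacy respects the semiconjugate relation $\trianglerighteq$, it transports the entire SC factorization of (\ref{mde}) to one of (\ref{mult}), the scalar factor map being conjugated by $\ln$. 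So the whole corollary should fall out once Theorem~\ref{septhm} is applied in the additive picture.

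First I would check that the hypothesis (\ref{lcmult}) is exactly the hypothesis (\ref{lc}) of Theorem~\ref{septhm} in disguise. Writing $t=e^{z}$ and using $\phi_j(z)=\ln\psi_j(t)$, taking $\ln$ of both sides of (\ref{lcmult}) gives $c^{k+1}z-c^{k}\phi_0(z)-\cdots-c\,\phi_{k-1}(z)-\phi_k(z)=0$, which is precisely (\ref{lc}); exponentiating reverses the step. Thus a constant $c\in\mathbb{C}$ satisfies (\ref{lcmult}) for all $t>0$ if and only if it satisfies the linear-dependence condition (\ref{lc}) for the transported maps $\phi_j$, and Theorem~\ref{septhm} applies to (\ref{mde}).

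Next I would transport the conclusion of Theorem~\ref{septhm} back through $\exp$. The additive form symmetry has components $h_j^{\mathrm{add}}$ given by (\ref{hjz}); setting $h_j(t)=\exp\!\bigl(h_j^{\mathrm{add}}(\ln t)\bigr)$ and substituting $\phi_i(\ln t)=\ln\psi_i(t)$ turns (\ref{hjz}) into (\ref{hjmult}) (the exponents being $c^{j},-c^{j-1},\dots,-1$), and then $z_0+\sum_j h_j^{\mathrm{add}}(z_j)$ exponentiates to the product (\ref{Hmult}). For the factor equation, put $r_n=e^{z_n}$ in the additive factor $z_{n+1}=\alpha_n+cz_n$ of (\ref{lnh}); since $e^{\alpha_n+cz_n}=\beta_n r_n^{c}$ this yields $r_{n+1}=\beta_n r_n^{c}$, and exponentiating the initial value $z_0=x_0+\sum_j h_j^{\mathrm{add}}(x_{-j})$ gives $r_0=y_0\,h_1(y_{-1})\cdots h_k(y_{-k})$, i.e.\ (\ref{fmult}). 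Finally, exponentiating the additive cofactor (\ref{ro}) converts the subtracted terms $-h_j^{\mathrm{add}}(x_{n-j+1})$ into dividing factors $h_j(y_{n-j+1})^{-1}$, producing (\ref{flink}).

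The only point needing care --- and the nearest thing to an obstacle --- is that $c$ is allowed to be complex, so the powers $t^{c^{j}}$, $r_n^{c}$ and $t^{c^{k+1}}$ must be read as $e^{w\ln t}$ with $\ln t$ real; these are single-valued and well defined on $(0,\infty)$, and condition (\ref{lcmult}) is exactly what makes the resulting complex identities consistent. When $c\notin\mathbb{R}$ the component maps $h_j$ take values in $\mathbb{C}$ rather than $(0,\infty)$, mirroring the situation in Theorem~\ref{septhm} where $h_j:\mathcal{F}\to\mathbb{C}$; I would simply record this and note, as in the additive case, that differentiability of the $\psi_j$ is used only to \emph{discover} the $h_j$ and is not needed for the final verification, which reduces to a direct substitution.
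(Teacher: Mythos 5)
Your proposal is correct and is exactly the paper's argument: the paper justifies Corollary \ref{sepmult} in one sentence by taking logarithms ($x_n=\ln y_n$, $\alpha_n=\ln\beta_n$, $\phi_j(r)=\ln\psi_j(e^r)$) to convert (\ref{mult}) into (\ref{mde}) and then invoking Theorem \ref{septhm}; you have simply carried out that transport in detail, including the correct observation that the last factor in (\ref{hjmult}) should carry exponent $-1$ (consistent with the computation of $h_1$ in Example \ref{multexp}) and the caveat about complex powers on $(0,\infty)$.
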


For a different and direct proof of the above corollary when $k=1$ see
\cite{hsijpam}. The next example applies Corollary \ref{sepmult} to an
interesting difference equation. In particular, it shows that the global
behavior of a population model discussed in \cite{fhl} is much more varied
than previously thought.

\begin{example}
\label{multexp}\textbf{(A simple equation with complicated multistable
solutions)}. Equations of type (\ref{mde}) or (\ref{mult}) are capable of
exhibiting complex behavior, e.g., coexisting stable solutions of
many different types that range from periodic to chaotic. Consider the
following second-order equation
\begin{equation}
x_{n+1}=x_{n-1}e^{a-x_{n}-x_{n-1}},\quad x_{-1},x_{0}>0.\label{exp}%
\end{equation}
Note that Eq.(\ref{exp}) has up to two isolated fixed points. One is the
origin which is repelling if $a>0$ (eigenvalues of linearization are $\pm
e^{a/2}$) and the other fixed point is $\bar{x}=a/2$. If $a>4$ then $\bar{x}$
is unstable and non-hyperbolic because the eigenvalues of the linearization of
(\ref{exp}) are $-1$ and $1-a/2$. Figure 1 shows the bifurcation of
numerically generated solutions of this equation with $a=4.6$. One initial
value $x_{-1}=2.3$ is fixed and the other initial value $x_{0}$ ranges from
2.3 to 4.8; i.e., approaching (or moving away from) the fixed point $\bar
{x}=2.3$ on a straight line segment in the plane.

%************************* FIGURE 1 *****************************
\begin{figure}[ptb]
\begin{center}
\includegraphics[width=4.795in]{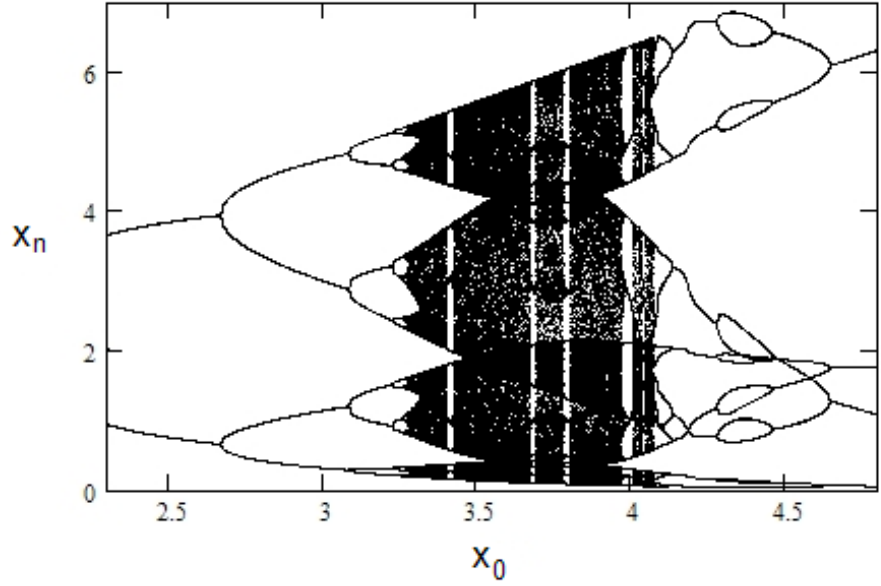}
\end{center}
\caption{{\small {Bifurcations of solutions of Eq.(\ref{exp}) with a changing
initial value; $a=4.6$ is fixed.}}}%
\end{figure}

Stable solutions with periods 2, 4, 8, 12 and 16 can be easily identified in
Figure 1. All of the solutions that appear in this figure represent
\textit{coexisting stable orbits} of Eq.(\ref{exp}). There are also periodic
and non-periodic solutions which do not appear in Figure 1 because they are
unstable (e.g., the fixed point $\bar{x}=2.3$). Additional bifurcations of
both periodic and non-periodic solutions occur outside the range 2.3-4.8 which
are not shown.

Understanding the global behaviors of solutions of Eq.(\ref{exp}) is
relatively easy if we examine its SC factorization given by (\ref{fmult}) and
(\ref{flink}). Here $k=1$ and
\[
\psi_{0}(t)=e^{-t},\quad\psi_{1}(t)=te^{-t},\quad\beta_{n}=e^{a}\text{ for all
}n.
\]
Thus (\ref{lcmult}) takes the form%
\begin{align*}
\psi_{0}(t)^{c}\psi_{1}(t) &  =t^{c^{2}}\text{ for all }t>0\\
e^{-ct}te^{-t} &  =t^{c^{2}}\text{ for all }t>0
\end{align*}
The last equality is true if $c=-1,$ which leads to the form symmetry%
\[
h_{1}(t)=t^{-1}\psi_{0}(t)^{-1}=\frac{1}{te^{-t}}\Rightarrow H(u_{0}%
,u_{1})=\frac{u_{0}}{u_{1}e^{-u_{1}}}%
\]
and SC factorization%
\begin{align}
r_{n+1} &  =\frac{e^{a}}{r_{n}},\quad r_{0}=x_{0}h_{1}(x_{-1})=\frac{x_{0}%
}{x_{-1}e^{-x_{-1}}}\label{star1}\\
x_{n+1} &  =\frac{r_{n+1}}{h_{1}(x_{n})}=r_{n+1}x_{n}e^{-x_{n}}.\label{star2}%
\end{align}
All positive solutions of (\ref{star1}) with $r_{0}\neq e^{a/2}$ are periodic
with period 2:%
\[
\left\{  r_{0},\frac{e^{a}}{r_{0}}\right\}  =\left\{  \frac{x_{0}}%
{x_{-1}e^{-x_{-1}}},\frac{x_{-1}e^{a-x_{-1}}}{x_{0}}\right\}  .
\]
Hence the orbit of each nontrivial solution $\{x_{n}\}$ of (\ref{exp}) in the
plane is restricted to the pair of curves%
\begin{equation}
\xi_{1}(t)=\frac{e^{a}}{r_{0}}te^{-t}\quad\text{and\quad}\xi_{2}%
(t)=r_{0}te^{-t}.\label{ic}%
\end{equation}
Now, if $x_{-1}$ is fixed and $x_{0}$ changes, then $r_{0}$ changes
proportionately to $x_{0}$. These changes in initial values are reflected as
changes in \textit{parameters} in (\ref{star2}). The orbits of the one
dimensional map $bte^{-t}$ where $b=r_{0}$ or $e^{a}/r_{0}$\ exhibit a variety
of behaviors as the parameter $b$ changes according to well-known rules such
as the fundamental ordering of cycles and the occurrence of chaotic behavior
with the appearance of period-3 orbits when $b$ is large enough.
Eq.(\ref{star2}) splits the orbits evenly over the pair of curves (\ref{ic})
as the initial value $x_{0}$ changes and coexisting, qualitatively different
stable solutions appear. \ 

%********************* FIGURE 2 ************************************
\begin{figure}[ptb]
\begin{center}
\includegraphics[width=4.557in]{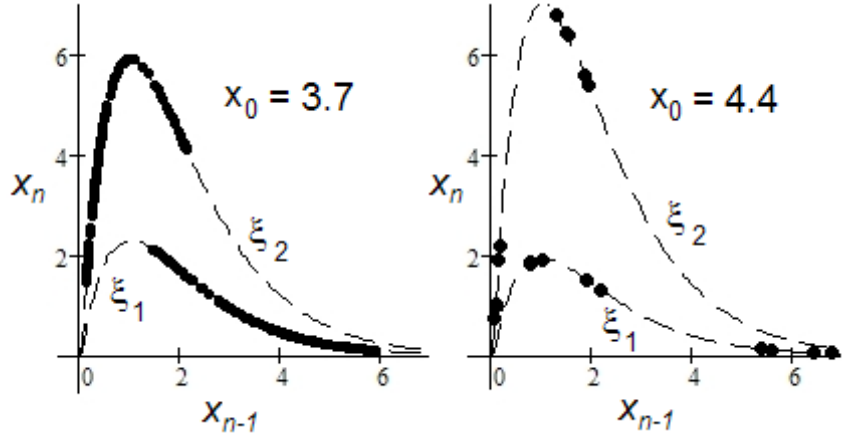}
\end{center}
\caption{{\small {Two of the orbits in Figure 1 shown here on their loci of
two curves $\xi_{1}$, $\xi_{2}$ in the state space.} }}%
\end{figure}

Figure 2 shows the orbits of (\ref{exp}) for two different initial values
$x_{0}$ with $a=4.6;$ the first 100 points of each orbit are discarded in
these images so as to highlight the asymptotic behavior of each orbit. 
\end{example}

\section{Concluding remarks}

In this article we covered the basic theory of reduction of order in higher
order difference equations by semiconjugate factorization. Needless to say, a
great deal more remains to be done in this area before a sufficient level of
maturity is reached. It is necessary to generalize and extend various ideas
discussed above. In addition, a more detailed theory is necessary for a full discussion of type-$(m,k)$ reductions for all $m$. 

The list of things needing proper attention is clearly too long for a research article. An upcoming book \cite{fsorbk} is devoted to the topic of semiconjugate factorizations and reduction of order. This book consolidates the existing literature on the subject and offers in-depth discussions, more
examples, additional theory and lists of problems for further study.

\end{document}